\newtheorem{theorem}{Theorem}
\newtheorem{corollary}[theorem]{Corollary}
\newtheorem{definition}[theorem]{Definition}
\newtheorem{lemma}[theorem]{Lemma}
\newtheorem{proposition}[theorem]{Proposition}
\newtheorem{remark}[theorem]{Remark}
\newenvironment{proof}[1][Proof]{\textbf{#1.} }{\ \rule{0.5em}{0.5em}}
\newcommand{\func}{\operatorname}
\newcommand{\limfunc}{\operatorname}
\begin{document}

\title{Poisson brackets of even symplectic forms on the algebra of differential forms }
\author{J. Monterde and J. A. Vallejo\\
Departament de Geometria i Topologia.\\
Universitat de Val\`{e}ncia (Spain).\\
{\small{e-mail: \texttt{juan.l.monterde@uv.es, jose.a.vallejo@uv.es}}}
}
\maketitle

\begin{abstract}
Given a symplectic form and a pseudo-riemannian metric on a manifold, a non
degenerate even Poisson bracket on the algebra of differential forms is defined and its
properties are studied. A comparison with the Koszul-Schouten bracket is established.
\end{abstract}

\section{Introduction}

The extension of a classical Poisson bracket to a bracket defined on
differential forms is a topic studied by many authors (\cite{[4]}, \cite{[7]}%
, \cite{[8]}, \cite{[9]}). Let us consider the problems posed by this
extension.

Let $M$ be a differentiable manifold of dimension $n$, and let $\Omega
(M)=\bigoplus_{k=0}^{n}\Omega^{k}(M)$ be the algebra of differential forms
on $M$. We want to consider Poisson brackets on $\Omega(M)$ and, as this
space has a natural grading, we need to generalize the usual definition to
this graded setting. Thus, a graded Poisson bracket on $\Omega(M)$ of $%
\mathbb{Z}-$degree $k\in\mathbb{Z}$, will be a mapping $[\![\_,\_]\!]:\Omega
(M)\times\Omega(M)\longrightarrow\Omega(M)$ verifying the conditions (here $%
\alpha,\beta,\gamma\in\Omega(M)$ and $\left| \_\right| $ denotes the degree
in $\Omega(M)$):

\begin{enumerate}
\item  $\mathbb{R}-$bilinearity

\item  $\left| [\![\alpha,\beta]\!]\right| =\left| \alpha\right| +\left|
\beta\right| +k$

\item  $[\![\alpha,\beta]\!]=-(-1)^{(\alpha+k)(\beta+k)}[\![\beta,\alpha]\!]$
(graded conmutativity)

\item  $[\![\alpha,\beta\wedge\gamma]\!]=[\![\alpha,\beta]\!]\wedge
\gamma+(-1)^{(\alpha+k)\beta}\beta\wedge\lbrack\![\alpha,\gamma]\!]$
(Leibniz rule)

\item  $[\![\alpha,[\![\beta,\gamma]\!]]\!]=[\![[\![\alpha,\beta]\!],\gamma
]\!]+(-1)^{(\alpha+k)(\beta+k)}[\![\beta,[\![\alpha,\gamma]\!]]\!]$ (graded
Jacobi identity).
\end{enumerate}

So, we see that a Poisson bracket on $\Omega(M)$ can be even or odd,
depending on $k\in\mathbb{Z}$. The best known brackets are those of odd
degree; a first step towards this process of passing from functions to
differential forms can be found in \cite{[1]} where a definition of a
bracket on differential $1$-forms satisfying a Jacobi identity is presented.
The extension of this bracket to another defined on the whole algebra of
differential forms (which we call here the Koszul-Schouten bracket and
denote $[\![\_,\_]\!]_{KS}$) was given in \cite{[6]}, with odd degree, and
has been used and generalized ever since in many ways. Its simple definition
and its good behavior with respect to the exterior derivative make it a nice
candidate to be used for applications in Physics and Geometry.

However, if we assume a Poisson bracket $\{\_,\_\}$ defined on $M$, the odd
brackets such as the Koszul-Schouten, are not objects that could be called
extensions of $\ \{\_,\_\}$, as they do not satisfy 
$$
\pi_{(0)}([\![f,h]\!])=\{f,h\},
$$
where $\pi_{(0)}$ denotes the $0$-degree term.

Our purpose in this paper is to look for a definition of an even graded
Poisson bracket on $\Omega (M)$ which satisfies the preceding condition,
with the additional requirement that the symplectic form associated to the
bracket should be as simple as possible, allowing the explicit computations
of such things as Hamiltonian vector fields.

As it is well known, the Jacobi identity is a highly nonlinear condition and
this fact makes it hard to find graded Poisson brackets. In the non
degenerate case, one can study symplectic forms instead of Poisson brackets.
The Jacobi condition is then translated into a linear condition: a non
degenerate $2$-form is symplectic if it is closed. This is what Rothstein
did in \cite{[13]} where he studied even symplectic forms on any graded
manifold, and where that condition is always verified.

In this paper \cite{[13]}, the author states that any even symplectic form
on a graded manifold $(M,\Gamma(\Lambda E))$ is the pull-back, by an
automorphism of $\Gamma(\Lambda E)$, of a symplectic form of $\mathbb{Z}-$%
degree $(0)+(2)$. Moreover, this $(0)+(2)-$symplectic form can be
parametrized by three objects: a symplectic form $\omega$ on the base
manifold $M$; a non degenerate bilinear form $g$, on the fiber bundle $%
E^{\ast}$, and a connection, $\nabla$, on $E$.

We shall be dealing with the graded manifold \ defined by the cotangent
bundle; that is $E=T^{\ast}M$, so that the differential forms on $M$ are
graded functions. Now, since the space of connections has an affine
structure, and since a non degenerate bilinear form $g$ on $TM$ is a
pseudoriemannian metric on $M$ which has a canonically associated
connection, its Levi-Civita connection, we can replace $\nabla$ by a tensor
field $L\in\Gamma(T^{\ast }M\otimes\Lambda^{2}T^{\ast}M)$. Therefore, a
class of even graded symplectic forms can be parametrized with these three
objects on the base manifold: a symplectic form, a pseudoriemannian metric
and the tensor field $L$.

In the odd case, a result similar to that of Rothstein particularized to the
graded manifold $(M,\Omega(M))$ (see \cite{[2]}), states that any odd
symplectic form on $(M,\Omega(M))$ is the pull-back by an automorphism $%
\varphi\in\limfunc{Aut}\Omega(M)$ of a symplectic form of $\mathbb{Z}-$%
degree $1$. Moreover, this symplectic form of $\mathbb{Z}-$degree $1$ can be
parametrized just by one object: a bundle isomorphism, $P$, between tangent
and cotangent bundles of the base manifold. The odd symplectic form is a
graded exact form $\func{d}^{G}\lambda_{P}$ (in a sense that will be
explained in the next section), where $\lambda_{P}$ is a graded $1-$form
associated to the isomorphism $P$ and $\func{d}^{G}$ is the graded exterior
derivative (see Section $2$ for the definitions).

Among all the derivations of the algebra of differential forms, i.e, graded
vector fields of the graded manifold $(M,\Omega(M))$, there is a
distinguished one: the exterior derivative, $d$. It is natural to ask for
the odd graded symplectic forms that admits it as a Hamiltonian vector
field. In \cite{[2]}, it is proved that these forms are precisely those
which come from a bundle isomorphism between tangent and cotangent bundles
induced by a symplectic form, $\omega$, on the base manifold $M$. Moreover,
in this case $d$ is the Hamiltonian vector field corresponding to $\omega$
(denoted $D_{\omega }=\func{d}$), considered as a graded function. The odd
Poisson bracket associated to $\func{d}^{G}\lambda_{\omega}$ is exactly the
Koszul-Schouten bracket \cite{[6]}.

In the same paper \cite{[2]}, it is shown that $\func{d}$ can never be a
Hamiltonian vector field for an even symplectic form. But we can ask if
replacing $\func{d}$ for another derivation allows us to maintain the
analogy with the odd case. As we will see the answer is in the affirmative
with the derivation $i_{J}$ playing the r\^{o}le of $\func{d}$, i.e, the
insertion of the tensor field $J$ defined by $\omega(X,Y)=g(JX,Y)$.

All this can be gathered together in the following table:

\bigskip

\begin{tabular}{c|c|c|}
\cline{2-2}\cline{2-3}
& Odd & Even \\ \hline
\multicolumn{1}{|c|}{Obtained as} & $\Theta=\varphi^{\ast}(\Theta_{P})$ & $%
\Theta=\varphi^{\ast}(\Theta_{\omega,g,L})$ \\ 
\multicolumn{1}{|c|}{($\varphi\in\limfunc{Aut}\Omega(M)$)} &  &  \\ \hline
\multicolumn{1}{|c|}{Characterizing} & $\func{d}\in \mathrm{\limfunc{Ham}}%
(\Theta_{P})$ & $i_{J}\in \mathrm{\limfunc{Ham}}(\Theta_{\omega,g,L})$ \\ 
\multicolumn{1}{|c|}{condition} & $\Longleftrightarrow P\text{ induced by }%
\omega$ & $\Longleftrightarrow i_{J}L=0$ \\ \hline
\multicolumn{1}{|c|}{Hamiltonian vector} & $\func{d}$ & $i_{J}$ \\ 
\multicolumn{1}{|c|}{field for $\omega$} &  &  \\ \hline
\end{tabular}

\bigskip

Moreover, a subclass of even symplectic forms can be selected by its special
properties which make them easier to work with. This subclass is obtained by
demanding $L=0$, or, equivalently, that the insertion of $\func{d}$ on the
even symplectic form be equal to $\lambda_{\omega}$. Let us remark that this
subclass is defined just through a symplectic form, $\omega$, and a
pseudoriemannian metric both on the manifold $M$. Just with this two
ingredients, and with no condition between them, it is possible to define a
true extension of the classical Poisson bracket associated to $\omega$.

There are many situations where this two ingredients are present: K\"ahler
manifolds, para-complex geometry, Lagrangian problems defined through a
Riemannian metric on a manifold, \dots Our belief is that, because of the
naturality of the even symplectic form and nice relations with the
Koszul-Schouten bracket, the associated extended even Poisson bracket will
find its applications in such kind of situations.

In this work we make a detailed study of the properties of this class of
even symplectic manifolds along with those of the corresponding induced
graded Poisson structure. One of the results we obtain (see Theorem $1$),
states a certain ``integrability property'' of the Koszul-Schouten (odd
graded) bracket and the even graded Poisson bracket we define, this property
being expressed in terms of the graded symplectic forms which they induce.
Also, we give some examples and applications related to complex geometry.

\section{Graded forms on $\Omega(M)$}

Let $\Omega(M;TM)=\bigoplus_{k=0}^{n}\Omega^{k}(M;TM)$ be the graded left $%
\Omega(M)$-module of the vector-valued differential forms. We adopt the
convention that if $v$ is an element of a graded module and the notation $%
\left| v\right| $ is used, we are tacitly assuming that $v$ is homogeneous
of degree $\left| v\right| $. The graded left module $\Omega(M;TM)$ can also
be viewed as a graded right $\Omega(M)$-module with the multiplication $%
S\wedge\alpha=(-1)^{\left| S\right| \left| \alpha\right| }\alpha\wedge S$,
for $\alpha\in\Omega(M)$ and $S\in\Omega(M;TM)$. Let ${\mathrm{\limfunc{Der}}%
}\Omega(M)$ be the graded right $\Omega (M)$-module of all derivations on $%
\Omega(M)$. ${\mathrm{\limfunc{Der}}}\Omega(M)$ is a graded Lie algebra with
the usual graded commutator. Unless otherwise stated, \textit{linear} will
mean ${}\mathbb{R}-$linear.

The natural grading of the algebra $\Omega(M)$ is the ${}\mathbb{Z}-$%
grading, but sometimes we will also refer to the $\mathbb{Z}_{2}$-grading.
In this case, homogeneous elements, or any other homogeneous structures,
will be called even, if $k=0$, or odd if $k=1$.

Recall that the elements of the graded module of derivations, ${\mathrm{%
\limfunc{Der}}}\Omega(M)$, can be regarded as graded vector fields on the
graded manifold based on $\Omega(M)$. By analogy, a graded differential form
is an $\Omega(M)$-multilinear alternating graded homomorphism from the
module of graded vector fields into $\Omega(M)$ (We shall refer to \cite{[5]}
for definitions).

Being a graded homomorphism of graded modules, a graded differential form
has a degree. Thus, we can define a $\mathbb{Z}\times\mathbb{Z}-$bigrading
on the module of graded differential forms and we will say that a graded
differential form $\lambda$ has bidegree $(p,k)\in\mathbb{Z}\times\mathbb{Z}$
if 
\begin{equation*}
\lambda:{\mathrm{\limfunc{Der}}}\Omega(M)\times.\overset{p)}{.}.\times{%
\mathrm{\limfunc{Der}}}\Omega(M)\longrightarrow\Omega(M)
\end{equation*}
and if, for all $D_{1},...,D_{p}\in$ ${\mathrm{\limfunc{Der}}}\Omega (M)$, 
\begin{equation*}
\left| \left\langle D_{1},...,D_{p};\lambda\right\rangle \right|
=\sum\limits_{i=1}^{p}\left| D_{i}\right| +k\text{.}
\end{equation*}

\begin{remark}
In Physics literature, $p$ is called the cohomological degree.
\end{remark}

Using this bigrading, any graded $p-$differential form $\lambda$ can be
decomposed as a sum $\lambda=\lambda_{(0)}+...+\lambda_{(n)}$, where $%
\lambda_{(i)}$ is an homogeneous graded form of bidegree $(p,i)$.

The insertion operator is defined as usual. If $\lambda$ is a graded form of
bidegree $(p,k)$ and $D$ is a derivation of degree $\left| D\right| $, then $%
\iota_{D}\lambda$ (or $\iota(D)\lambda$) is the graded form of bidegree $%
(p-1,k+\left| D\right| )$ defined by 
\begin{equation*}
\langle D_{1},\dots,D_{p-1};\iota_{D}\lambda\rangle:=\langle
D_{1},\dots,D_{p-1},D;\lambda\rangle.
\end{equation*}

This implies that the bidegree of the operator $\iota(D)$ is $(-1,|D|)$.

We shall denote by $\func{d}^{G}$ the graded exterior differential. (See 
\cite{[5]} for details.) In particular, for a graded $0$-form $\alpha
\in\Omega(M)$, $\langle D;\func{d}^{G}\alpha\rangle=D(\alpha)$, and for a
graded $1$-form, $\lambda$, on $\Omega(M)$ we have 
\begin{equation*}
\langle D_{1},D_{2};(\func{d}^{G}\lambda)\rangle=D_{1}(\langle
D_{2};\lambda\rangle)-(-1)^{\left| D_{1}\right| \left| D_{2}\right|
}D_{2}(\langle D_{1};\lambda\rangle)-\langle\lbrack D_{1},D_{2}];\lambda
\rangle.
\end{equation*}
The graded exterior differential is an operator of bidegree $(1,0)$.

A fundamental result is the following corollary to a theorem by Kostant (4.7
in \cite{[5]}).

\begin{proposition}
\label{corKos} Every $\func{d}^{G}$-closed graded form of bidegree $(p,k)$
with $k>0 $ is exact.
\end{proposition}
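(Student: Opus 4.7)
The plan is to build an explicit contracting homotopy for $\func{d}^{G}$ in positive internal degree, exactly as in the ordinary Poincaré lemma but using the grading of $\Omega(M)$ instead of the usual radial scaling. Let $E\in{\mathrm{\limfunc{Der}}}\,\Omega(M)$ be the Euler derivation with respect to the natural $\mathbb{Z}$-grading of $\Omega(M)$, i.e.\ the derivation of $\mathbb{Z}$-degree $0$ characterised by $E(\alpha)=k\,\alpha$ for every $\alpha\in\Omega^{k}(M)$. By the formula recalled in the text, the insertion $\iota_{E}$ has bidegree $(-1,0)$, so the graded Lie derivative
\begin{equation*}
\mathcal{L}^{G}_{E}:=[\iota_{E},\func{d}^{G}]=\iota_{E}\,\func{d}^{G}+\func{d}^{G}\,\iota_{E}
\end{equation*}
is an operator of bidegree $(0,0)$ that commutes with $\func{d}^{G}$.

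The central step is to check that $\mathcal{L}^{G}_{E}$ acts on a graded form of bidegree $(p,k)$ as multiplication by $k$. On graded $0$-forms this is the definition of $E$; for $p=1$ it is a direct calculation from the explicit formula for $\langle D_{1},D_{2};\func{d}^{G}\lambda\rangle$ given in the excerpt, using that $[E,D]$ adjusts the degree of $D$ in the expected way. The general case then follows because $\mathcal{L}^{G}_{E}$ is a derivation of the wedge product of graded forms, and the algebra of graded differential forms is generated in each bidegree by $0$-forms and $\func{d}^{G}$ of $0$-forms; alternatively, this is precisely the content of Kostant's Theorem~4.7 in \cite{[5]} specialised to the $\mathbb{Z}$-grading.

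Granted this, the corollary is immediate: if $\lambda$ has bidegree $(p,k)$ with $k>0$ and $\func{d}^{G}\lambda=0$, then
\begin{equation*}
k\,\lambda=\mathcal{L}^{G}_{E}\lambda=\iota_{E}(\func{d}^{G}\lambda)+\func{d}^{G}(\iota_{E}\lambda)=\func{d}^{G}(\iota_{E}\lambda),
\end{equation*}
so $\lambda=\func{d}^{G}\bigl(\tfrac{1}{k}\iota_{E}\lambda\bigr)$ is exact. Note that division by $k$ is legitimate precisely because $k>0$; at $k=0$ the argument collapses, which matches the failure of exactness for ordinary closed differential forms on $M$.

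The only genuine obstacle is the bookkeeping step of verifying that $\mathcal{L}^{G}_{E}$ really is the multiplication-by-$k$ operator on all bidegrees $(p,k)$: this requires the graded Cartan calculus (Leibniz rule for $\mathcal{L}^{G}_{E}$, commutation $[\mathcal{L}^{G}_{E},\func{d}^{G}]=0$, and the sign conventions for $\iota_{D}$) to be in place. Once these are invoked from \cite{[5]}, no further work is needed.
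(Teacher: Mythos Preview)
Your argument is correct. The Euler derivation $E=i_{\mathrm{Id}}$ (so $E\alpha=|\alpha|\,\alpha$) satisfies $[E,D]=|D|\,D$ for every graded vector field $D$, and plugging this into the standard formula for the graded Lie derivative of a $p$-form gives
\[
(\mathcal{L}^{G}_{E}\lambda)(D_{1},\dots,D_{p})
=\bigl(\textstyle\sum|D_{i}|+k\bigr)\lambda(D_{1},\dots,D_{p})-\textstyle\sum|D_{i}|\,\lambda(D_{1},\dots,D_{p})
=k\,\lambda(D_{1},\dots,D_{p}),
\]
which is the key identity you need. (This direct computation is cleaner than appealing to generation by $0$-forms and their differentials, though that route is also valid since locally the graded de~Rham algebra of $(M,\Omega(M))$ is generated in this way.) The homotopy $\lambda\mapsto\tfrac{1}{k}\iota_{E}\lambda$ then does the job exactly as you say.

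By contrast, the paper does not give a proof at all: it records the statement as a direct corollary of Kostant's Theorem~4.7 in \cite{[5]} and moves on. Your approach makes explicit the contracting homotopy that underlies Kostant's result in this particular $\mathbb{Z}$-graded situation, and it has the virtue of being self-contained and of explaining transparently why the hypothesis $k>0$ is essential. The paper's approach buys brevity and places the fact in the general supergeometry framework; yours buys an explicit primitive, which is sometimes useful in computations later in the paper (e.g.\ when exactness is used in the proof of Theorem~28).
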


Other familiar operators on ordinary manifolds also have counterparts on
graded manifolds. If $D\in{\mathrm{\limfunc{Der}}}\Omega(M)$, then the Lie
operator $\mathcal{L}_{D}^{G}$ is defined by 
\begin{equation*}
\mathcal{L}_{D}^{G}=\iota(D)\circ\func{d}^{G}+\func{d}^{G}\circ\iota(D).
\end{equation*}

Note that $\mathcal{L}_{D}^{G}$ is an operator of bidegree $(0,|D|)$. We
also have some formulae similar to the classic ones, as $\lbrack\mathcal{L}%
_{D}^{G},\func{d}^{G}]=0$, or $\lbrack\mathcal{L}_{D}^{G},\iota_{E}]=\iota_{%
\lbrack D,E]}. $

Now, we extend to the graded case some important concepts of Hamiltonian
(symplectic) mechanics.

\begin{definition}
Let $\Theta$ be a graded symplectic form of degree $k\in\mathbb{Z}_{2}(%
\mathbb{Z})$ on $\Omega(M)$, i.e. a $\func{d}^{G}-$closed, non degenerate
graded $2-$form.

\begin{enumerate}
\item  A graded vector field $D$ is a Hamiltonian graded vector field if the
graded $1$-form $\iota(D)\Theta$ is $\func{d}^{G}$-exact; i.e, if there
exits $\alpha\in\Omega(M)$, such that $\iota(D)\Theta=\func{d}^{G}\alpha$.
We shall denote a Hamiltonian $D$ by $D_{\alpha}$ in order to emphasize the
differential form on $M$ to which $D$ is associated, and sometimes even $%
D_{\alpha}^{\Theta}$.

\item  $D$ is a locally Hamiltonian graded vector field if the graded $1$%
-form $\iota(D)\Theta$ is $\func{d}^{G}$-closed. Equivalently, $\mathcal{L}%
_{D}^{G}\Theta=0$.
\end{enumerate}
\end{definition}

\begin{remark}
As a consequence of corollary $2.1$, every locally Hamiltonian graded vector
field of positive degree is a Hamiltonian graded vector field. \medskip
\end{remark}

Note that ${\mathrm{\limfunc{Der}}}\Omega(M)$ is a graded locally free $%
\Omega(M)-$module (this is a consequence of a classical theorem of
Fr\"{o}licher-Nijenhuis), so any graded form is uniquely determined by its
action on derivations $\{i_{X},\mathcal{L}_{X}\}$, where $i_{X}$ and $%
\mathcal{L}_{X}$ are the insertion operator and the Lie derivative with
respect to a vector field $X$ on $M$, respectively.

\section{Definition of a class of even symplectic forms}

As it has been mentioned in the introduction, an even symplectic form on a
graded manifold can be modeled by three objects: an usual symplectic form on
the base manifold, a pseudoriemannian metric on the characteristic bundle,
and a connection in the fiber bundle.

In the particular case of the graded manifold $(M,\Omega(M))$, it is
possible to define even symplectic forms without the need of any linear
connection. We shall start with a symplectic form $\omega$, on a
differentiable manifold $M$, together with a pseudoriemannian metric on $M$, 
$g$.

Let us define the graded $2$-form $\Theta_{\omega}$, on $(M,\Omega(M)),$ as
follows: for any pair of derivations on $\Omega(M)$, $D_{1},D_{2}$ 
\begin{equation*}
<D_{1},D_{2};\Theta_{\omega}>:=\omega(\tilde{D}_{1},\tilde{D}_{2}),
\end{equation*}
where $\tilde{D}$ is the unique vector field on $M$ defined by $\tilde
{D}(f)=\pi_{0}(D(f))$, for any $f\in C^{\infty}(M)$ (here $\pi_{0}$ is the
projection assigning to each differential form its $0-$degree component).
For example, $\left\langle \mathcal{L}_{X},\mathcal{L}_{Y};\Theta_{\omega
}\right\rangle =\omega(X,Y).$

According to the $\mathbb{Z}-$bigraduation of the module of graded
symplectic forms, $\Theta_{\omega}$ has bidegree $(2,0)$ and it is
degenerate. Indeed, we have $\left\langle
i_{X},D;\Theta_{\omega}\right\rangle =0$ for any derivation $D$. So, in
order to define a graded symplectic form we need to add another term.

To introduce this new term, recall that as a consequence of the
Fr\"{o}licher-Nijenhuis theorem, if $\{X_{1},X_{2},\dots,X_{n}\}$ is a local
basis of vector fields, then $\{\mathcal{L}_{X_{1}},...,\mathcal{L}%
_{X_{n}},i_{X_{1}},...,i_{X_{n}}\}$ is a graded basis of the $\Omega(M)$%
-graded module of derivations.

On the other hand, given the pseudoriemannian metric $g$ on $M$, we can
define the graded $1$-form $\lambda_{g}$ as follows. Given any vector field $%
X$ we have 
\begin{gather}
<i_{X};\lambda_{g}>=g(X,\_)=\flat(X)  \label{eq3.1} \\
\left\langle \mathcal{L}_{X};\lambda_{g}\right\rangle =\func{d}(g(X,\_))=%
\func{d}\flat(X).  \notag
\end{gather}
where $\flat(X)$ denotes the $1-$form canonically associated to a vector
field $X$ ``lowering indexes'' with the metric $g$.

For an arbitrary derivation $D$, we extend this definition by $\Omega (M)$%
-linearity. According to the $\mathbb{Z}-$bigraduation of the module of
graded symplectic forms, $\lambda_{g}$ has bidegree $(1,2).$

\begin{proposition}
The graded $2$ form $\Theta_{\omega,g}:=\Theta_{\omega}+\frac{1}{2}\func{d}%
^{G}\lambda_{g}$ is an even symplectic form on $(M,\Omega(M))$.
\end{proposition}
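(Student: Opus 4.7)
The plan is to verify the two defining properties of a graded symplectic form: that $\Theta_{\omega,g}$ is $\func{d}^{G}$-closed, and that it is non-degenerate. Closedness is almost immediate: since $(\func{d}^{G})^{2}=0$, the summand $\tfrac{1}{2}\func{d}^{G}\lambda_{g}$ is automatically closed, so it suffices to prove $\func{d}^{G}\Theta_{\omega}=0$. I would do this by plugging three derivations $D_{1},D_{2},D_{3}$ into the Cartan-type formula for $\func{d}^{G}$ on a graded $2$-form; every term is of the form $\tilde{D}_{i}(\omega(\tilde{D}_{j},\tilde{D}_{k}))$ or $\omega(\widetilde{[D_{i},D_{j}]},\tilde{D}_{k})$. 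Because insertions satisfy $\widetilde{i_{X}}=0$, any configuration containing an insertion vanishes; and because the tilde map intertwines commutators, $\widetilde{[D_{1},D_{2}]}=[\tilde{D}_{1},\tilde{D}_{2}]$, the only surviving case is $D_{i}=\mathcal{L}_{X_{i}}$, in which the sum collapses to $\func{d}\omega(X_{1},X_{2},X_{3})=0$.

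For non-degeneracy I would work in the local graded basis $\{\mathcal{L}_{X_{i}},i_{X_{j}}\}$ of derivations and study the matrix $M_{AB}=\langle E_{A},E_{B};\Theta_{\omega,g}\rangle$. The central computation is
\begin{equation*}
\langle i_{X}, i_{Y};\, \tfrac{1}{2}\func{d}^{G}\lambda_{g}\rangle = g(X,Y),
\end{equation*}
obtained by applying the displayed formula for $\func{d}^{G}$ of a graded $1$-form with $D_{1}=i_{X}$, $D_{2}=i_{Y}$: the graded commutator $[i_{X},i_{Y}]$ vanishes, while the remaining two terms each contribute $g(X,Y)$ after substituting $\langle i_{Z};\lambda_{g}\rangle=g(Z,\_)$ and using the sign $(-1)^{|i_{X}||i_{Y}|}=-1$. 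Combining this with $\langle\mathcal{L}_{X_{i}},\mathcal{L}_{X_{j}};\Theta_{\omega}\rangle=\omega(X_{i},X_{j})$ and $\langle\mathcal{L}_{X_{i}},i_{X_{j}};\Theta_{\omega}\rangle=0$, and noting that the remaining contributions from $\tfrac{1}{2}\func{d}^{G}\lambda_{g}$ (namely the $2$-form part of the $\mathcal{L}\mathcal{L}$ block and the $1$-form part of the $\mathcal{L}i$ block) all lie in $\Omega^{\geq 1}(M)$, I obtain a decomposition $M=M_{0}+N$.

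Here
\begin{equation*}
M_{0}=\begin{pmatrix} \omega(X_{i},X_{j}) & 0 \\ 0 & g(X_{i},X_{j}) \end{pmatrix}
\end{equation*}
is block-diagonal with $C^{\infty}(M)$-entries, and $N$ has entries in $\Omega^{\geq 1}(M)$. Since $\omega$ is symplectic and $g$ is non-degenerate, $M_{0}$ is invertible over $C^{\infty}(M)$. Since $\Omega^{\geq 1}(M)$ is nilpotent under the wedge product (form degree is bounded by $\dim M$), the matrix $M_{0}^{-1}N$ is nilpotent in the matrix algebra over $\Omega(M)$, so $I+M_{0}^{-1}N$ is invertible by a terminating geometric series. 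Therefore $M$ is invertible, which is the non-degeneracy of $\Theta_{\omega,g}$.

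The delicate step is the identification $\langle i_{X},i_{Y};\tfrac{1}{2}\func{d}^{G}\lambda_{g}\rangle=g(X,Y)$: this is where the graded signs enter substantively and it is precisely what produces the metric in the lower block of $M_{0}$ (and hence is what forces us to include the prefactor $\tfrac{1}{2}$ in the definition). Once that equality is verified, both closedness (via $\func{d}\omega=0$) and non-degeneracy (via the ``invertible modulo nilpotent'' packaging) follow by essentially routine manipulations.
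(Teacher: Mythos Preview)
Your proof is correct and follows the same route as the paper: closedness of $\Theta_{\omega}$ reduces to $\func{d}\omega=0$, and non-degeneracy comes from the invertibility of the degree-zero block matrix $\left(\begin{smallmatrix}\omega&0\\0&g\end{smallmatrix}\right)$ in the local basis $\{\mathcal{L}_{X_i},i_{X_j}\}$ (the paper states this as the Kostant criterion $\det\tilde{\Omega}_{\omega,g}=\det\omega\cdot\det g\neq 0$, whereas you spell out the nilpotent-correction/geometric-series argument explicitly). One small caution: the claim $\widetilde{[D_1,D_2]}=[\tilde D_1,\tilde D_2]$ is false for arbitrary derivations (e.g.\ $[i_X,\func{d}]=\mathcal{L}_X$ has tilde $X$, yet $\widetilde{i_X}=\widetilde{\func{d}}=0$), but it does hold on the basis $\{\mathcal{L}_{X_i},i_{X_j}\}$, and by $\Omega(M)$-multilinearity that is all your argument actually uses.
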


\begin{proof}
Simply note that $\Theta _{\omega }$ is a closed form as we can see after
some computations. In fact, 
\begin{equation*}
\left\langle D_{1},D_{2},D_{3};\func{d}^{G}\Theta _{\omega }\right\rangle =(%
\func{d}\omega )(\tilde{D}_{1},\tilde{D}_{2},\tilde{D}_{3})=0,
\end{equation*}
because $\omega $ is a usual symplectic form. The non degeneracy follows
immediately from the next result.
\end{proof}

\begin{proposition}
\label{prop3} The expression of the even symplectic form $\Theta_{\omega,g}$
with respect to the local basis is given by the formulae 
\begin{align*}
\left\langle \mathcal{L}_{X},\mathcal{L}_{Y};\Theta_{\omega,g}\right\rangle
& =\omega(X,Y)+\alpha(X,Y)\text{ }where\text{ }\alpha(X,Y)\in\Omega ^{2}(M)%
\text{ and} \\
& (\alpha(X,Y))(U,V)=g(\nabla_{U}Y,\nabla_{V}X)-g(\nabla_{U}X,\nabla
_{V}Y)-R(X,Y,U,V), \\
\left\langle \mathcal{L}_{X},i_{Y};\Theta_{\omega,g}\right\rangle &
=g(\nabla_{\_}X,Y), \\
\left\langle i_{X},i_{Y};\Theta_{\omega,g}\right\rangle & =g(X,Y),
\end{align*}
where $X,Y,U,V\in\mathcal{X}(M)$, $\nabla$ is the Levi-Civita connection of $%
g$ and $R$ its Riemann curvature tensor.
\end{proposition}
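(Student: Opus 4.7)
The plan is to expand $\Theta_{\omega,g}=\Theta_{\omega}+\tfrac12\func{d}^{G}\lambda_{g}$ and evaluate each summand on every pair drawn from the local basis $\{\mathcal{L}_{X},\,i_{X}\}$. Since $\widetilde{i_{X}}=0$ on functions, $\Theta_{\omega}$ vanishes whenever one of its arguments is an insertion; its sole contribution is $\langle\mathcal{L}_{X},\mathcal{L}_{Y};\Theta_{\omega}\rangle=\omega(X,Y)$ by definition. The three identities therefore reduce to computing $\tfrac12\langle D_{1},D_{2};\func{d}^{G}\lambda_{g}\rangle$ via the general formula for the graded exterior derivative of a graded $1$-form, keeping track of the signs $|\mathcal{L}_{X}|=0$ and $|i_{X}|=-1$.

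For the pair $(i_{X},i_{Y})$, the bracket $[i_{X},i_{Y}]=0$ kills the last term and the sign $(-1)^{|i_{X}||i_{Y}|}=-1$ turns the two surviving terms into the sum $i_{X}\flat(Y)+i_{Y}\flat(X)=2g(X,Y)$, which gives the third equation after halving. For the mixed pair $(\mathcal{L}_{X},i_{Y})$, one uses $[\mathcal{L}_{X},i_{Y}]=i_{[X,Y]}$ together with (\ref{eq3.1}) to obtain $\mathcal{L}_{X}\flat(Y)-i_{Y}\func{d}\flat(X)-\flat([X,Y])$; pairing with $V\in\mathcal{X}(M)$ and rewriting every Lie bracket via the torsion-free identity $[X,V]=\nabla_{X}V-\nabla_{V}X$ and every directional derivative via metric compatibility, all but two terms cancel and the remainder is $2g(\nabla_{V}X,Y)$.

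The real obstacle is the pair $(\mathcal{L}_{X},\mathcal{L}_{Y})$, where $[\mathcal{L}_{X},\mathcal{L}_{Y}]=\mathcal{L}_{[X,Y]}$ yields
\begin{equation*}
\tfrac12\bigl(\mathcal{L}_{X}\func{d}\flat(Y)-\mathcal{L}_{Y}\func{d}\flat(X)-\func{d}\flat([X,Y])\bigr),
\end{equation*}
a $2$-form on $M$. The plan is to start from the identity $\func{d}\flat(Z)(U,V)=g(\nabla_{U}Z,V)-g(\nabla_{V}Z,U)$ (itself a direct consequence of $\nabla g=0$ and the vanishing of the torsion) and expand each Lie derivative at $(U,V)$. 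The second-order covariant derivatives such as $\nabla_{U}\nabla_{X}Y$ and $\nabla_{V}\nabla_{X}Y$ that arise from $\mathcal{L}_{X}\func{d}\flat(Y)-\mathcal{L}_{Y}\func{d}\flat(X)$ combine with analogous terms coming from $\func{d}\flat([X,Y])$; after invoking the Ricci identity $\nabla_{X}\nabla_{U}Z-\nabla_{U}\nabla_{X}Z-\nabla_{[X,U]}Z=R(X,U)Z$, these collapse and a curvature contribution survives as $-R(X,Y,U,V)$ (the first Bianchi identity being used to recombine the four residual terms into a single one). The first-order cross terms reassemble as $g(\nabla_{U}Y,\nabla_{V}X)-g(\nabla_{U}X,\nabla_{V}Y)$, giving precisely $\alpha(X,Y)(U,V)$. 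Non-degeneracy then follows directly from the block structure: the matrix of $\Theta_{\omega,g}$ in the basis $\{i_{X},\mathcal{L}_{X}\}$ has the non-degenerate pseudo-riemannian block $g$ in the $(i,i)$-corner, so degeneracy of the whole form would force degeneracy of its Schur complement on the $(\mathcal{L},\mathcal{L})$-corner, which is ruled out by the non-degeneracy of $\omega$.
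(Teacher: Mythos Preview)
Your proposal is correct and follows exactly the route the paper intends: the paper's own proof consists of the single sentence ``It is just a matter of computation using the definitions,'' and you have simply carried out that computation explicitly, using the graded exterior-derivative formula on $\lambda_{g}$ together with $\nabla g=0$, torsion-freeness, and the Ricci/Bianchi identities. Your non-degeneracy argument via the invertible $g$-block and its Schur complement is a mild rephrasing of the paper's observation that the matrix of degree-$0$ terms is block-diagonal with blocks $\omega$ and $g$, hence $\det\tilde{\Omega}_{\omega,g}=\det\omega\det g\neq 0$.
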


\begin{proof}
It is just a matter of computation using the definitions. Note that these
expressions imply 
\begin{equation*}
\det\tilde{\Omega}_{\omega,g}=\det\omega\det g\neq0,
\end{equation*}
where $\tilde{\Omega}_{\omega,g}$ is the matrix built with the $0-$degree
terms of the entries of $\Omega_{\omega,g}$, and that the condition for an
arbitrary graded symplectic form $\Lambda$ to be non-degenerated, is
precisely that its associated $\tilde{\Lambda}$ has $\det\tilde{\Lambda}%
\neq0 $ (see \cite{[5]} for details).
\end{proof}

\begin{remark}
\label{formulanabla} Alternatively, we could use $\{\nabla_{X_{1}},...,%
\nabla_{X_{n}},i_{X_{1}},...,i_{X_{n}}\}$ as a local basis of derivations.
Then, a straightforward computation using the relation $\mathcal{L}%
_{X}=\nabla_{X}+i_{\nabla\_X}$ gives 
\begin{align*}
\left\langle \nabla_{X},\nabla_{Y};\Theta_{\omega,g}\right\rangle &
=\omega(X,Y)-g(R(X,Y)\_,\_), \\
\left\langle \nabla_{X},i_{Y};\Theta_{\omega,g}\right\rangle & =0, \\
\left\langle i_{X},i_{Y};\Theta_{\omega,g}\right\rangle & =g(X,Y).
\end{align*}
\end{remark}

The family of even symplectic forms thus defined $\Theta_{\omega,g}$ is just
a subclass of even symplectic forms of bidegree $(2,0+2)$. Now, we give some
indications about the general case (see \cite{[10]} for details).

The most general even graded symplectic form $\Theta$, can be written as 
\begin{equation*}
\Theta=\Theta_{\omega}+\frac{1}{2}\func{d}^{G}\lambda_{g,L}:=\Theta_{%
\omega,g,L}
\end{equation*}
where $g,\omega$ are a metric and a symplectic form on the base manifold $M$
(as before) respectively, $\Theta_{\omega}$ has been already defined and $%
L\in T^{(1,2)}(M)$. The graded $1-$form $\lambda_{g,L}$ is defined by 
\begin{align*}
\left\langle i_{X};\lambda_{g,L}\right\rangle & =g(X,\_)=\flat(X), \\
\left\langle \mathcal{L}_{X};\lambda_{g,L}\right\rangle & =\func{d}%
\flat(X)+L(X;\_,\_).
\end{align*}

Let us also recall how to construct an odd graded symplectic form,
concretely the Koszul-Schouten form (see \cite{[6]}). For this, we only need
a usual symplectic form on the base manifold, $\omega$, and to define the
graded $1-$form $\lambda_{\omega}$ by 
\begin{align}
\left\langle i_{X};\lambda_{\omega}\right\rangle & =0  \label{deflamdaomega}
\\
\left\langle \mathcal{L}_{X};\lambda_{\omega}\right\rangle & =\omega (X,\_).
\notag
\end{align}

Note that while $\lambda_{g}$ has bidegree $(1,2)$, $\lambda_{\omega}$ has
bidegree $(1,-1)$. Now, the Koszul-Schouten form $\Theta_{KS}$ is the odd
graded exact form given by 
\begin{equation*}
\Theta_{KS}=\limfunc{d}\nolimits^{G}\lambda_{\omega}\text{.}
\end{equation*}

\section{Relationship with the Koszul-Schouten bracket and characterization
of the class $\Theta_{\protect\omega,g}$}

As we have just seen, Koszul has given a way to construct, from a symplectic
manifold $(M;\omega)$, an odd graded symplectic structure on the graded
manifold $(M,\Omega(M))$. The corresponding Poisson bracket, called the
Koszul-Schouten bracket, is extremely simple, and this situation is in
strong contrast with that of even graded symplectic forms, which are harder
to handle.

The class of graded forms of type $\Theta_{\omega,g}$, enjoy several
properties that make them relatively easier to work with. We can understand
these properties by showing an intimate relation with the Koszul-Schouten
form, which is characteristic of this class.

\begin{lemma}
Let $\lambda_{g}$ be defined as in Section $3$. Then 
\begin{equation*}
\left\langle \func{d};\lambda_{g}\right\rangle =0.
\end{equation*}
\end{lemma}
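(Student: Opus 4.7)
The plan is to evaluate $\langle\func{d};\lambda_g\rangle$ in a local coordinate system, where both $\func{d}$ and the action of $\lambda_g$ on the basis derivations take particularly simple forms. Pick coordinates $(x^i)$ on $M$ with frame $\{\partial_i\}$ and coframe $\{dx^i\}$. Because $\mathcal{L}_{\partial_i}(dx^j)=0$, a routine check on generators $f\,dx^{j_1}\wedge\cdots\wedge dx^{j_p}$ of $\Omega(M)$ shows that the exterior derivative decomposes in the clean form $\func{d}=\sum_i dx^i\wedge\mathcal{L}_{\partial_i}$, with no insertion part. In a non-coordinate frame this decomposition would pick up additional terms coming from the structure constants of the frame, so choosing coordinates is essential to keep the computation short.

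From here, the $\Omega(M)$-linearity of $\lambda_g$, together with the defining relation $\langle\mathcal{L}_X;\lambda_g\rangle=\func{d}\flat(X)$ and $\flat(\partial_i)=g_{ij}\,dx^j$, immediately yields
$$\langle\func{d};\lambda_g\rangle=\sum_i\func{d}\flat(\partial_i)\wedge dx^i=\sum_{i,j,k}\partial_k g_{ij}\,dx^k\wedge dx^j\wedge dx^i.$$
The whole argument now reduces to a symmetry observation: the relabeling $i\leftrightarrow j$ leaves $\partial_k g_{ij}=\partial_k g_{ji}$ unchanged but reverses the sign of $dx^k\wedge dx^j\wedge dx^i$, so the total sum equals its own negative and must vanish. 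This is essentially the usual reason a symmetric $(0,2)$-tensor contributes nothing to a totally skew $3$-form.

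The main obstacle I expect is bookkeeping rather than conceptual. One must justify $\func{d}=\sum_i dx^i\wedge\mathcal{L}_{\partial_i}$ in the graded setting (that is, as an identity between derivations of $\Omega(M)$ rather than just on functions or $1$-forms) and be careful with the sign convention $S\wedge\alpha=(-1)^{|S||\alpha|}\alpha\wedge S$ when moving coordinate forms past derivations while applying $\Omega(M)$-linearity. Here those signs happen to be trivial since $\lvert\mathcal{L}_{\partial_i}\rvert=0$, but this should be recorded explicitly so that the passage from $\func{d}=\sum_i dx^i\wedge\mathcal{L}_{\partial_i}$ to $\langle\func{d};\lambda_g\rangle=\sum_i\langle\mathcal{L}_{\partial_i};\lambda_g\rangle\wedge dx^i$ is unambiguous.
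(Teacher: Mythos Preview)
Your proposal is correct and follows essentially the same approach as the paper: both proofs write $\func{d}=\mathcal{L}_{Id}=\sum_k dx^k\wedge\mathcal{L}_{\partial_k}$ in local coordinates, use the $\Omega(M)$-linearity of $\lambda_g$ together with $\langle\mathcal{L}_{\partial_k};\lambda_g\rangle=\func{d}\flat(\partial_k)$, and then exploit the symmetry of $g$. The only difference is in the final step: the paper evaluates the resulting $3$-form $dx^k\wedge\func{d}g(\partial_k,\_)$ on three arbitrary vector fields $A,B,C$ and expands out all the terms explicitly, whereas you stay in the coordinate coframe and kill the sum $\sum_{i,j,k}\partial_k g_{ij}\,dx^k\wedge dx^j\wedge dx^i$ directly by the $i\leftrightarrow j$ relabeling argument, which is cleaner and makes the role of the symmetry of $g$ more transparent.
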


\begin{proof}
It is based on a coordinate computation, using the fact that $\limfunc{d}=%
\mathcal{L}_{Id}$. Let $A,B,C\in\mathcal{X}(M)$, then: 
\begin{align*}
\left\langle \limfunc{d};\lambda_{g}\right\rangle (A,B,C) & =(\limfunc{d}%
x^{k}\wedge\left\langle \mathcal{L}_{\partial_{k}};\lambda_{g}\right\rangle
)(A,B,C)= \\
& =(\limfunc{d}x^{k}\wedge\limfunc{d}g(\partial_{k},\_))(A,B,C)= \\
& =A^{k}.(Bg(\partial_{k},C)-Cg(\partial_{k},B))- \\
& B^{k}.(Ag(\partial_{k},C)-Cg(\partial_{k},A))+ \\
& C^{k}.(Ag(\partial_{k},B)-Bg(\partial_{k},A))- \\
&
A^{k}.g(\partial_{k},[B,C])+B^{k}.g(\partial_{k},[A,C])-C^{k}.g(%
\partial_{k},[A,B])- \\
& g(A,[B,C])+g(B,[A,C])-g(C,[A,B])=0.
\end{align*}
\end{proof}

\begin{theorem}
With the preceding notations, $\iota_{\func{d}}\Theta_{\omega
,g}=\lambda_{\omega}$ (where $\lambda_{\omega}$ is given by (\ref
{deflamdaomega})) so 
\begin{equation*}
\mathcal{L}_{\func{d}}^{G}\Theta_{\omega,g}=\Theta_{KS}.
\end{equation*}
\end{theorem}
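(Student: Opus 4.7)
The plan is to prove the identity $\iota_{\func{d}}\Theta_{\omega,g}=\lambda_{\omega}$ first; the Lie-derivative assertion is then immediate from the definition $\mathcal{L}_{\func{d}}^{G}=\iota_{\func{d}}\circ\func{d}^{G}+\func{d}^{G}\circ\iota_{\func{d}}$ combined with the closedness of $\Theta_{\omega,g}$:
\[
\mathcal{L}_{\func{d}}^{G}\Theta_{\omega,g}=\iota_{\func{d}}\func{d}^{G}\Theta_{\omega,g}+\func{d}^{G}\iota_{\func{d}}\Theta_{\omega,g}=0+\func{d}^{G}\lambda_{\omega}=\Theta_{KS}.
\]

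To establish $\iota_{\func{d}}\Theta_{\omega,g}=\lambda_{\omega}$, I would split $\Theta_{\omega,g}=\Theta_{\omega}+\tfrac{1}{2}\func{d}^{G}\lambda_{g}$ and handle the two summands separately. On the second summand the defining formula for $\mathcal{L}_{\func{d}}^{G}$ yields $\iota_{\func{d}}\func{d}^{G}\lambda_{g}=\mathcal{L}_{\func{d}}^{G}\lambda_{g}-\func{d}^{G}\iota_{\func{d}}\lambda_{g}$, and the preceding Lemma kills the last term via $\iota_{\func{d}}\lambda_{g}=\langle\func{d};\lambda_{g}\rangle=0$. Thus the theorem reduces to the two sub-identities $\iota_{\func{d}}\Theta_{\omega}=\lambda_{\omega}$ and $\mathcal{L}_{\func{d}}^{G}\lambda_{g}=0$, both of which I would verify by pairing against the Fr\"{o}licher--Nijenhuis basis $\{\mathcal{L}_{X},i_{X}\}$.

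For $\mathcal{L}_{\func{d}}^{G}\lambda_{g}=0$, I would exploit the graded commutator identity $[\mathcal{L}_{\func{d}}^{G},\iota_{D}]=\iota_{[\func{d},D]}$ together with the classical facts $[\func{d},\mathcal{L}_{X}]=0$ and $[\func{d},i_{X}]=\mathcal{L}_{X}$. Pairing with $D=\mathcal{L}_{X}$ reduces the computation to $\func{d}^{2}\flat(X)=0$; pairing with $D=i_{X}$ yields $\langle\mathcal{L}_{X};\lambda_{g}\rangle-\func{d}\flat(X)=\func{d}\flat(X)-\func{d}\flat(X)=0$. For $\iota_{\func{d}}\Theta_{\omega}=\lambda_{\omega}$, I would expand $\func{d}$ locally in the Fr\"{o}licher--Nijenhuis basis as $\func{d}=\mathcal{L}_{\partial_{k}}\cdot dx^{k}$ (right action) and apply graded $\Omega(M)$-multilinearity of $\Theta_{\omega}$ together with the defining values on the basis: $\langle\mathcal{L}_{X};\iota_{\func{d}}\Theta_{\omega}\rangle=\omega(X,\partial_{k})\,dx^{k}=\omega(X,\_)$, and $\langle i_{X};\iota_{\func{d}}\Theta_{\omega}\rangle=0$, matching $\lambda_{\omega}$.

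The main obstacle will be the sign bookkeeping: graded $\Omega(M)$-multilinearity in the right module structure of $\operatorname{Der}\Omega(M)$ and the graded commutator manipulations involving $\mathcal{L}_{\func{d}}^{G}$ both introduce signs depending on the degrees of the derivations, which must be tracked carefully. Once those signs are in order, the argument distils to the familiar identities $\func{d}^{2}=0$ and $[\func{d},i_{X}]=\mathcal{L}_{X}$ on $M$, requiring no new geometric input.
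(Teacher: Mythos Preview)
Your proposal is correct and follows essentially the same route as the paper: both split $\Theta_{\omega,g}=\Theta_{\omega}+\tfrac{1}{2}\func{d}^{G}\lambda_{g}$, invoke the preceding Lemma $\langle\func{d};\lambda_{g}\rangle=0$, and then reduce to the classical identities $[\func{d},\mathcal{L}_{X}]=0$, $[\func{d},i_{X}]=\mathcal{L}_{X}$ and $\func{d}^{2}=0$ by testing against the basis $\{\mathcal{L}_{X},i_{X}\}$. The only cosmetic difference is that you package the vanishing of the $\func{d}^{G}\lambda_{g}$ contribution as the statement $\mathcal{L}_{\func{d}}^{G}\lambda_{g}=0$ via the commutator identity $[\mathcal{L}_{\func{d}}^{G},\iota_{D}]=\iota_{[\func{d},D]}$, whereas the paper expands $\langle D,\func{d};\func{d}^{G}\lambda_{g}\rangle$ directly from the defining formula for $\func{d}^{G}$ on graded $1$-forms; unwinding your commutator identity reproduces exactly that formula, so the two computations are the same.
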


\begin{proof}
If we consider $\iota_{\func{d}}\Theta_{\omega,g}$ acting on basic
derivations, and make repeated use of Lemma $8$, we find ($Y\in\mathcal{X}%
(M) $) on $\mathcal{L}_{Y}$: 
\begin{align*}
\left\langle \mathcal{L}_{Y},\limfunc{d};\Theta_{\omega,g}\right\rangle &
=\left\langle \mathcal{L}_{Y},\limfunc{d};\Theta_{\omega }\right\rangle +%
\frac{1}{2}\left\langle \mathcal{L}_{Y},\limfunc{d};\limfunc{d}%
\nolimits^{G}\lambda_{g}\right\rangle =\omega(Y,\_)-\frac {1}{2}\left\langle 
\mathcal{L}_{Y};\lambda_{g}\right\rangle \\
& =\omega(Y,\_)-\frac{1}{2}\limfunc{d}\limfunc{d}Y_{\flat}=\omega(Y,\_)=%
\left\langle \mathcal{L}_{Y};\lambda_{\omega}\right\rangle ,
\end{align*}
and on $\iota_{Y}$: 
\begin{align*}
2\left\langle \iota_{Y},\limfunc{d};\Theta_{\omega,g}\right\rangle &
=\left\langle \iota_{Y},\limfunc{d};\limfunc{d}\nolimits^{G}\lambda_{g}%
\right\rangle =\limfunc{d}\left\langle \iota_{Y};\lambda _{g}\right\rangle
-\left\langle [\iota_{Y},\limfunc{d}];\lambda _{g}\right\rangle \\
& =\limfunc{d}Y_{\flat}-\left\langle \mathcal{L}_{Y};\lambda
_{g}\right\rangle =\limfunc{d}Y_{\flat}-\limfunc{d}Y_{\flat
}=0=2\left\langle \iota_{Y};\lambda_{\omega}\right\rangle .
\end{align*}
\end{proof}

\begin{remark}
J. Grabowski (see \cite{[4]}), has constructed an even bracket on
differential forms which is a kind of ``integral'' of the Koszul-Schouten
one, in the sense that (if $[\![\_,\_]\!]_{G}$ denotes Grabowski's bracket) 
\begin{equation*}
\func{d}[\![\alpha,\beta]\!]_{{G}}=[\![\func{d}\alpha ,\func{d}\beta]\!]_{{KS%
}}.
\end{equation*}
The problem with this bracket, is that it does not satisfy Leibniz's rule,
so it is not possible to speak about Hamiltonian vector fields (they are not
derivations). But following this, we also could say that $%
[\![\_,\_]\!]_{\Theta_{\omega,g}}$ is an integral of the Koszul-Schouten
bracket, in view of the theorem, as its associated symplectic form $%
\Theta_{\omega,g}$ gives through a (graded Lie) derivative the
Koszul-Schouten form $\Theta_{KS}$. Another bracket (which is neither graded
Poisson, but a Loday one) with similar properties is due to Y.
Kosmann-Schwarzbach (see \cite{[7]}).
\end{remark}

Let us see an interesting consequence of this result. In the odd graded
case, the fact that $\func{d}$ is a graded Hamiltonian vector field,
translates in a very useful relation: if $\alpha\in\Omega(M)$, then knowing $%
D_{\alpha}^{\Theta_{KS}}$ we also know $D_{\func{d}\alpha}^{\Theta_{KS}}$,
as the formula 
\begin{equation}
D_{\func{d}\alpha}^{\Theta_{KS}}=[\func{d},D_{\alpha}^{\Theta_{KS}}]
\label{eq4.1}
\end{equation}
holds. Indeed, this relation is simply expressing the fact that $\func{d}$
is a derivation of the Koszul-Schouten bracket, 
\begin{equation*}
\func{d}[\![\alpha,\beta]\!]_{\Theta_{KS}}-[\![\func{d}\alpha,\beta]\!]_{%
\Theta_{KS}}-[\![\alpha,\func{d}\beta]\!]_{\Theta _{KS}}=0.
\end{equation*}

In the even case, as already mentioned, $d$ can be never a graded
Hamiltonian vector field, so we expect a different relation. What is
remarkable, is that the failure of (\ref{eq4.1}) is given by the
Koszul-Schouten form in a direct way.

\begin{corollary}
Let $\alpha\in\Omega(M)$ be any differential form. Then 
\begin{equation}
D_{{\func{d}}\alpha}=[{\func{d}},D_{\alpha}]+(-1)^{\alpha
}\Theta_{\omega,g}^{-1}(\iota_{D_{\alpha}}\Theta_{KS}).  \label{eq4.2}
\end{equation}
\end{corollary}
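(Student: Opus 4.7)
The plan is to verify the formula by contracting both sides with $\Theta_{\omega,g}$: since $\Theta_{\omega,g}$ is non-degenerate, the Hamiltonian graded vector field $D_{\func{d}\alpha}$ is characterized by $\iota_{D_{\func{d}\alpha}}\Theta_{\omega,g}=\func{d}^G(\func{d}\alpha)$, so it suffices to show that the right-hand side of (\ref{eq4.2}) contracts with $\Theta_{\omega,g}$ to $\func{d}^G(\func{d}\alpha)$.

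The second summand is immediate: by the very definition of $\Theta_{\omega,g}^{-1}$ one has $\iota_{\Theta_{\omega,g}^{-1}(\eta)}\Theta_{\omega,g}=\eta$ for any graded $1$-form $\eta$, so this piece contributes $(-1)^{\alpha}\iota_{D_\alpha}\Theta_{KS}$. For the first summand I would invoke the graded Cartan-type identity $[\mathcal{L}_D^G,\iota_E]=\iota_{[D,E]}$ recalled in Section~2, applied to $\Theta_{\omega,g}$ with $D=\func{d}$ and $E=D_\alpha$. Using the defining relation $\iota_{D_\alpha}\Theta_{\omega,g}=\func{d}^G\alpha$, the central Theorem~1 in the form $\mathcal{L}_{\func{d}}^G\Theta_{\omega,g}=\Theta_{KS}$, the commutation $\mathcal{L}_{\func{d}}^G\func{d}^G=\func{d}^G\mathcal{L}_{\func{d}}^G$, and the observation that for a graded $0$-form $\alpha$ one has $\mathcal{L}_{\func{d}}^G\alpha=\iota_{\func{d}}\func{d}^G\alpha=\func{d}\alpha$ (because $\iota_{\func{d}}\alpha=0$), I expect to obtain an expression of the form $\func{d}^G(\func{d}\alpha)-(-1)^{\alpha}\iota_{D_\alpha}\Theta_{KS}$. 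Adding this to the contribution of the second summand collapses everything to $\func{d}^G(\func{d}\alpha)$, and non-degeneracy of $\Theta_{\omega,g}$ concludes the argument.

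The only delicate step is sign bookkeeping in the graded commutator $[\mathcal{L}_{\func{d}}^G,\iota_{D_\alpha}]$: the interior product $\iota_{D_\alpha}$ shifts form-parity by $|\alpha|$ while $\mathcal{L}_{\func{d}}^G$ is odd, so the conjugation sign is precisely what is required to match the prefactor $(-1)^{\alpha}$ in the statement. Once this is accounted for, every other ingredient — non-degeneracy, the graded Cartan formula, Theorem~1, and $(\func{d}^G)^2=0$ — is already in place, and the proof reduces to a short symbolic manipulation rather than any further geometric input.
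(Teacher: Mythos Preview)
Your proposal is correct and follows essentially the same route as the paper: both arguments contract with $\Theta_{\omega,g}$, invoke the graded Cartan identity $[\mathcal{L}_{\func{d}}^{G},\iota_{D_{\alpha}}]=\iota_{[\func{d},D_{\alpha}]}$, use $\iota_{D_{\alpha}}\Theta_{\omega,g}=\func{d}^{G}\alpha$ together with $\mathcal{L}_{\func{d}}^{G}\Theta_{\omega,g}=\Theta_{KS}$ from the preceding theorem, and reduce $\mathcal{L}_{\func{d}}^{G}(\func{d}^{G}\alpha)$ to $\func{d}^{G}\func{d}\alpha$. The only cosmetic difference is that the paper computes $\iota_{[\func{d},D_{\alpha}]}\Theta_{\omega,g}$ first and then rearranges, whereas you contract the full right-hand side and check it equals $\func{d}^{G}(\func{d}\alpha)$; the ingredients and sign analysis are identical.
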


\begin{proof}
Just compute, making use of some facts like $[\mathcal{L}_{D}^{G},\func{d}%
^{G}]=0$, for any derivation $D$ on $\Omega(M)$, and $[\mathcal{L}%
_{D}^{G},\iota_{E}]=\iota_{\lbrack D,E]}$ for any derivations $D,E$: 
\begin{align*}
\iota_{\lbrack\func{d},D_{\alpha}]}\Theta_{\omega,g} & =[\mathcal{L}_{\func{d%
}}^{G},\iota_{D_{\alpha}}]\Theta_{\omega ,g}=(\mathcal{L}_{\func{d}%
}^{G}\circ\iota_{D_{\alpha}}-(-1)^{\alpha }\iota_{D_{\alpha}}\circ\mathcal{L}%
_{\func{d}}^{G})\Theta_{\omega ,g}= \\
& =\mathcal{L}_{\func{d}}^{G}(\func{d}^{G}\alpha
)-(-1)^{\alpha}\iota_{D_{\alpha}}\Omega_{KS}=\func{d}^{G}\func{d}%
\alpha-(-1)^{\alpha}\iota_{D_{\alpha}}\Omega_{KS}= \\
& =\iota_{D_{\func{d}\alpha}}\Theta_{\omega,g}-(-1)^{\alpha}\iota_{D_{%
\alpha}}\Omega_{KS}
\end{align*}
\end{proof}

Translated in terms of graded Poisson brackets, this reads 
\begin{equation*}
\func{d}[\![\alpha,\beta]\!]_{\Theta_{\omega,g}}-[\![\func{d}%
\alpha,\beta]\!]_{\Theta_{\omega,g}}-[\![\alpha,\func{d}\beta]\!]_{\Theta_{%
\omega,g}}=\left\langle D_{\alpha},D_{\beta};\Theta _{KS}\right\rangle ,
\end{equation*}
so the Koszul-Schouten form appears as an object that forbids the
possibility for $\func{d}$ of being a graded Hamiltonian vector field with
respect to the class of forms $\Theta_{\omega,g}$.

Now, we can ask what condition on a general graded even symplectic form on $%
(M,\Omega(M))$ assures us a relation as in the theorem.

\begin{theorem}
With the notations of section $3$, 
\begin{equation*}
\iota_{\func{d}}\Theta_{\omega,g,L}=\lambda_{\omega}
\end{equation*}
if and only if $L\equiv0$. Thus, this property characterizes the subclass of
forms $\Theta_{\omega,g}$.
\end{theorem}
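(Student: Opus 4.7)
The plan is to isolate the $L$-dependent part of $\Theta_{\omega,g,L}$ and use Theorem $1$ to reduce the question to a tensorial condition on $L$.

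Set $\mu := \lambda_{g,L} - \lambda_g$; this is the graded $1$-form characterized by $\langle i_X;\mu\rangle = 0$ and $\langle \mathcal{L}_X;\mu\rangle = L(X;\_,\_)$. Directly from the definitions, $\Theta_{\omega,g,L} = \Theta_{\omega,g} + \tfrac{1}{2}\func{d}^G\mu$, so by Theorem $1$, $\iota_{\func{d}}\Theta_{\omega,g,L} = \lambda_\omega + \tfrac{1}{2}\,\iota_{\func{d}}\func{d}^G\mu$. The characterizing identity is thus equivalent to the vanishing of $\iota_{\func{d}}\func{d}^G\mu$, and the ``if'' direction is immediate, since $L = 0$ forces $\mu = 0$.

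For the converse, I would test $\iota_{\func{d}}\func{d}^G\mu$ against the basic derivation $\iota_Y$. The formula for $\func{d}^G$ acting on a graded $1$-form, together with $\langle \iota_Y;\mu\rangle = 0$ and Cartan's identity $[\iota_Y,\func{d}] = \mathcal{L}_Y$, collapses the evaluation to $\iota_Y\tilde{L} - L(Y;\_,\_)$, where $\tilde{L} := \langle \func{d};\mu\rangle$. A coordinate computation of the type performed in Lemma $8$ (writing $\func{d} = \func{d}x^{k}\wedge \mathcal{L}_{\partial_k}$ and pulling the form factor out by $\Omega(M)$-linearity) identifies $\tilde L$ with the $3$-form $(A,B,C) \mapsto L(A;B,C) - L(B;A,C) + L(C;A,B)$.

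Expanding $\iota_Y\tilde L$ and imposing the vanishing then yields the pointwise identity $L(B;Y,C) = L(C;Y,B)$ for all vector fields. Combined with the skew-symmetry in the last two slots built into $L \in \Gamma(T^*M \otimes \Lambda^2 T^*M)$, this makes $L$ simultaneously symmetric in its first two arguments and antisymmetric in its last two; chasing three transpositions forces $L(X;Y,Z) = -L(X;Y,Z)$, hence $L \equiv 0$. The step I expect to be most delicate is the coordinate identification of $\tilde L$ together with the sign bookkeeping in the $\func{d}^G$ formula (since $\mu$ carries bidegree $(1,2)$ and $\iota_Y$ has $\mathbb{Z}$-degree $-1$); once this is done, the symmetry/antisymmetry clash closes the argument automatically.
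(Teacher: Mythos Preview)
Your proposal is correct and follows essentially the same route as the paper: split off the $L$-dependent piece $\mu=\lambda_L$, reduce via the preceding theorem to the vanishing of $\iota_{\func{d}}\func{d}^{G}\mu$, test against $\iota_Y$, identify $\langle \func{d};\mu\rangle$ by the local computation $\func{d}=\func{d}x^{k}\wedge\mathcal{L}_{\partial_k}$, and finish with the symmetry/antisymmetry cyclic chase. One small caution on exactly the sign issue you flagged: the relation $L(B;Y,C)=L(C;Y,B)$ together with skew-symmetry in the last two slots actually forces \emph{anti}symmetry (not symmetry) in the first two arguments, but the three-transposition argument still closes to $L=-L$ once you use the original relation at the end, so the conclusion is unaffected.
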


\begin{proof}
Recall that we can express the most general even symplectic form $\Theta$ as 
\begin{equation*}
\Theta=\Theta_{\omega,g,L}=\Theta_{\omega}+\frac{1}{2}\limfunc{d}%
\nolimits^{G}\lambda_{g}+\frac{1}{2}\limfunc{d}\nolimits^{G}\lambda_{L}
\end{equation*}
with $L\in\Gamma(TM)\otimes\Omega^{2}(M)$, where $\lambda_{L}$ is given by ($%
X\in\mathcal{X}(M)$) 
\begin{align*}
\left\langle \mathcal{L}_{X};\lambda_{L}\right\rangle & =L(X)\in\Omega
^{2}(M) \\
\left\langle i_{X};\lambda_{L}\right\rangle & =0,
\end{align*}
and 
\begin{align*}
\left\langle \mathcal{L}_{X};\lambda_{g}\right\rangle & =\limfunc{d}(g(X))=%
\limfunc{d}\flat(X) \\
\left\langle i_{X};\lambda_{g}\right\rangle & =g(X)=\flat(X).
\end{align*}
Now, we only have to compute the action of $\iota_{\limfunc{d}%
}\Theta_{\omega,g,L}\in\Omega_{G}^{1}$ on the basic derivations $i_{X},%
\mathcal{L}_{X}$. Thus, 
\begin{align*}
\left\langle i_{X},\limfunc{d};\Theta_{\omega,g,L}\right\rangle &
=\left\langle i_{X},\limfunc{d};\Theta_{\omega}\right\rangle +\frac
{1}{2}\left\langle i_{X},\limfunc{d};\limfunc{d}\nolimits^{G}\lambda_{g}%
\right\rangle +\frac{1}{2}\left\langle i_{X},\limfunc{d};\limfunc{d}%
\nolimits^{G}\lambda_{L}\right\rangle = \\
& =\frac{1}{2}i_{X}\left\langle \limfunc{d};\lambda_{L}\right\rangle -\frac{1%
}{2}L(X).
\end{align*}

To determine $i_{X}\left\langle \limfunc{d};\lambda_{L}\right\rangle $, we
proceed locally, and write $\limfunc{d}=\mathcal{L}_{Id}=\limfunc{d}%
x^{i}\wedge\emph{L}_{\partial_{x^{i}}}$; then, a direct calculation shows
that $i_{X}\left\langle \limfunc{d};\lambda _{L}\right\rangle =L(X)-\limfunc{%
d}x^{i}\wedge i_{X}L(\partial_{x^{i}}),$and so $\left\langle i_{X},\limfunc{d%
};\Theta_{\omega,g,L}\right\rangle =-\limfunc{d}x^{i}\wedge
i_{X}L(\partial_{x^{i}}).$ Evaluating on $Y,Z\in\mathcal{X}(M)$, we obtain 
\begin{equation*}
2\left\langle i_{X},\limfunc{d};\Theta_{\omega,g,L}\right\rangle
=L(Z;X,Y)-L(Y;X,Z).
\end{equation*}
Assume now that $\iota_{\limfunc{d}}\Theta_{\omega,g,L}=\lambda_{\omega }$.
Then, by the definition of $\lambda_{\omega}$, $L(Z;X,Y)-L(Y;X,Z)=0,$and
this, along with the antisymmetry in the two last arguments (recall $%
L(X;\_,\_)\in\Omega^{2}(M)$), implies that $L$ is symmetric in its two first
arguments. But, then 
\begin{equation*}
L(Y;Z,X)=L(Z;Y,X)=-L(Z;X,Y)=-L(X;Z,Y)=-L(Y;Z,X),
\end{equation*}
so $L=0$.

Reciprocally, if it is $L=0$, then $\Theta_{\omega,g,L}=\Theta_{\omega,g} $
and, as we have just seen, in this case $\iota_{\limfunc{d}%
}\Theta_{\omega,g,L}=\lambda_{\omega}$.
\end{proof}

\section{Computation of the graded hamiltonian vector fields and graded
Poisson brackets}

In the preceding section, we have given an expression for determining the
graded hamiltonian vector fields for differentials $d\alpha $, $D_{d\alpha }$
(see corollary $1$), when one knows the corresponding $D_{\alpha }$. But we
can improve this result and, in fact, we can derive explicit expressions for 
$D_{f}$ and $D_{df}$ when $f\in C^{\infty }(M)$; thus linearity will enable
us to compute any hamiltonian graded vector field.

We will begin by determining $D_{f}$. At first sight, it can be a
non-homogeneous derivation of the form $D_{f}=D_{f}^{-1}+\sum\limits_{i\in 
\mathbb{N}}D_{f}^{i}$ (each $D_{f}^{i}$ is an homogeneous term, with the
superindex denoting the degree), but it is actually simpler.

\begin{lemma}
The hamiltonian graded vector field $D_{f}$ only has terms in even degrees.
\end{lemma}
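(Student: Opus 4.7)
The plan is to exploit the non-degeneracy of $\Theta_{\omega,g}$ together with the fact that both of its homogeneous components carry \emph{even} internal degree. Writing $D_f = D_f^{-1} + \sum_{i \geq 0} D_f^i$ with $D_f^k$ homogeneous of degree $k$, the Hamiltonian equation reads $\iota(D_f)\Theta_{\omega,g} = \func{d}^{G} f$; since $f$ is a smooth function, the right-hand side has bidegree $(1,0)$, so in particular its internal $\mathbb{Z}$-degree is $0$, which is even.

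Recall from Section $3$ that $\Theta_{\omega,g} = \Theta_{\omega} + \tfrac{1}{2}\func{d}^{G}\lambda_g$ is the sum of two homogeneous pieces of bidegrees $(2,0)$ and $(2,2)$ respectively, both of even internal degree. Therefore, for any $D_f^k$ of degree $k$, the form $\iota(D_f^k)\Theta_{\omega,g}$ has components only in bidegrees $(1,k)$ and $(1,k+2)$, both sharing the parity of $k$. In other words, the assignment $D \mapsto \iota(D)\Theta_{\omega,g}$ preserves the $\mathbb{Z}_2$-parity of the internal degree. Splitting the Hamiltonian equation by this parity, the odd part must vanish separately:
\[
\iota\!\left(\sum_{k\ \text{odd}} D_f^k\right)\Theta_{\omega,g} = 0.
\]

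Non-degeneracy of $\Theta_{\omega,g}$ (established via the determinant computation in the proof of Proposition \ref{prop3}) implies that the map $D \mapsto \iota(D)\Theta_{\omega,g}$ from $\mathrm{Der}\,\Omega(M)$ to graded $1$-forms is injective, so $\sum_{k\ \text{odd}} D_f^k = 0$. Since its summands are homogeneous of pairwise distinct degrees, each one must vanish individually, giving $D_f^k = 0$ for every odd $k \in \{-1,1,3,\dots\}$. There is no genuine obstacle in the argument; it rests on nothing more than the parity observation that the two building blocks $\Theta_{\omega}$ and $\func{d}^{G}\lambda_g$ both contribute only in even internal degrees, together with the non-degeneracy that makes $\iota(\_)\Theta_{\omega,g}$ injective on derivations.
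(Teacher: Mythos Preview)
Your proof is correct and follows essentially the same approach as the paper: split $D_f$ into even and odd parts, observe that $\Theta_{\omega,g}$ has only even internal degree so that insertion preserves $\mathbb{Z}_2$-parity, and conclude from non-degeneracy that the odd part vanishes. The paper's proof is a one-line version of exactly this argument; you have simply made the parity bookkeeping and the appeal to non-degeneracy explicit.
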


\begin{proof}
A straightforward checking of the $\mathbb{Z}_{2}-$degree at both members of
the equation $\iota_{D_{f}^{(even)}}\Omega_{\omega,g}+\iota_{D_{f}^{(odd)}}%
\Omega_{\omega,g}=\func{d}^{G}f$, implies that $D_{f}^{(odd)}=0$.
\end{proof}

As a consequence of this result, we will write 
\begin{equation*}
D_{f}=D_{f}^{(even)}=\sum_{i\in\mathbb{N}}D_{f}^{2i}.
\end{equation*}

\begin{remark}
Note that the sum is finite, the number of terms being bounded by $\dim M$.
\end{remark}

On the other hand, a theorem analogous to that of Fr\"{o}licher-Nijenhuis
(see \cite{[11]}), guarantees that given a connection $\nabla$ on $M$, each
homogeneous derivation $D_{f}^{2i}$ can be decomposed in the form 
\begin{equation}
D_{f}^{2i}=\nabla_{K_{f}^{2i}}+i_{L_{f}^{2i}}  \label{eq6.1}
\end{equation}
where $K_{f}^{2i}\in\Gamma(TM)\otimes\Omega^{2i}(M),L_{f}^{2i}\in
\Gamma(TM)\otimes\Omega^{2i+1}(M)$. What we are now going to prove, is that
the tensors $K_{f}^{2i}$ can be computed recursively, while the $L_{f}^{2i}$
are zero.

\begin{remark}
Having in mind $($\ref{eq6.1}$)$, we will write 
\begin{equation*}
K_{f}=\sum_{i\in\mathbb{N}}K_{f}^{2i}\text{ },\text{ }L_{f}=\sum _{i\in%
\mathbb{N}f}^{2i}L_{f}^{2i}
\end{equation*}
so that $D_{f}=\nabla_{K_{f}}+i_{L_{f}}$.
\end{remark}

\begin{lemma}
With the preceding notations, $L_{f}\equiv 0$.
\end{lemma}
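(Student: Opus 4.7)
The plan is to exploit the defining equation $\iota_{D_f}\Theta_{\omega,g}=\func{d}^G f$ by testing it on the insertion derivations $i_X$, $X\in\mathcal{X}(M)$, which is where the $L_f$-components necessarily show up by themselves. Concretely, I would compute $\langle i_X;\iota_{D_f}\Theta_{\omega,g}\rangle = \langle i_X,D_f;\Theta_{\omega,g}\rangle$ and evaluate the right-hand side $\langle i_X;\func{d}^G f\rangle = i_X(f) = 0$, since $f$ is a 0-form. This converts the Hamiltonian equation into a pointwise statement about $L_f$ after a careful expansion.

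The key computational ingredient is Remark 6 (the formulas for $\Theta_{\omega,g}$ in the $\{\nabla_X,i_X\}$-basis), which is the reason I work in this basis rather than the $\mathcal{L}$-basis. Using $\Omega(M)$-bilinearity, writing $K_f^{2i}=\alpha^a\otimes X_a$ and $L_f^{2i}=\beta^b\otimes Y_b$ with $\alpha^a\in\Omega^{2i}(M)$, $\beta^b\in\Omega^{2i+1}(M)$, one has (up to the appropriate graded sign)
\begin{equation*}
\langle i_X,\nabla_{K_f^{2i}};\Theta_{\omega,g}\rangle = \pm\alpha^a\wedge\langle i_X,\nabla_{X_a};\Theta_{\omega,g}\rangle = 0,
\end{equation*}
since $\langle \nabla_Y,i_X;\Theta_{\omega,g}\rangle = 0$ by Remark 6. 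Similarly,
\begin{equation*}
\langle i_X,i_{L_f^{2i}};\Theta_{\omega,g}\rangle = \pm\beta^b\wedge g(X,Y_b) = \pm g\bigl(X,L_f^{2i}\bigr),
\end{equation*}
a form of pure degree $2i+1$.

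Summing over $i$ and equating to zero, the distinct $\mathbb{Z}$-degrees force each homogeneous piece to vanish: $g(X,L_f^{2i})=0$ for every $X\in\mathcal{X}(M)$ and every $i$. Non-degeneracy of the pseudo-Riemannian metric $g$ then forces the vector-valued form $L_f^{2i}$ to be zero for all $i$, i.e.\ $L_f=\sum_i L_f^{2i}\equiv 0$, which is the claim.

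I expect the only real care point to be the bookkeeping of the graded signs when one slides the form coefficients $\alpha^a,\beta^b$ out of the pairing (using the graded $\Omega(M)$-linearity of $\Theta_{\omega,g}$ in each slot), and verifying that the homogeneous-degree separation really isolates each $L_f^{2i}$ cleanly; since the contributions sit in $\Omega^{2i+1}(M)$ for pairwise different $i$, this separation is automatic once the signs are handled, so no substantial obstacle remains.
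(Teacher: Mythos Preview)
Your proposal is correct and is exactly the approach the paper takes: the paper's proof is the single sentence ``Consider the graded $1$-form $\func{d}^{G}f=\iota_{D_{f}}\Theta_{\omega,g}$, and make it act upon the derivation $i_{X}$'', and you have simply filled in the details of that computation using the $\{\nabla_X,i_X\}$-basis formulas of Remark~\ref{formulanabla}. Your caveat about tracking the graded signs is apt but, as you note, does not affect the conclusion.
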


\begin{proof}
Consider the graded $1$-form $\func{d}^{G}f=\iota_{D_{f}}\Omega_{\omega,g} $%
, and make it act upon the derivation $i_{X}\in \mathcal{D}_{1}^{-1}$, with $%
X\in\mathcal{X}(M)$.
\end{proof}

\begin{proposition}
With the preceding notations 
\begin{equation*}
K_{f}^{2i}=(-1)^{i}J^{-1}(R(\_,\_)(J^{-1}(.\overset{i}{\breve{.}}%
.J^{-1}(R(\_,\_)X_{f})...))),
\end{equation*}
where $J\in T^{(1,1)}(M)$ is the tensor field determined by $\omega
(X,Y)=g(JX,Y)$, $R\in \Omega ^{2}(M;TM)$ is the curvature of $g$ and $X_{f}$
is the hamiltonian vector field of $f$ with respect to $\omega $. In
particular, we have the recursion formula 
\begin{equation*}
K_{f}^{2i}=-J^{-1}(R(\_,\_)K_{f}^{2(i-1)})
\end{equation*}
\end{proposition}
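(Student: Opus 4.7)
The plan is to start from the defining identity $\iota_{D_{f}}\Theta_{\omega,g} = \func{d}^{G}f$. Since $L_{f}\equiv 0$ by Lemma~11, we already know that $D_{f} = \nabla_{K_{f}}$ with $K_{f} = \sum_{i\in\mathbb{N}} K_{f}^{2i}$, so the only remaining task is to pin down each homogeneous piece $K_{f}^{2i}$. Non-degeneracy of $\Theta_{\omega,g}$ together with local freeness of the module of derivations means it suffices to test the identity against the basis $\{\nabla_{X}, i_{X}\}$. The $i_{X}$-test is what produced $L_{f}=0$ in Lemma~11 and therefore holds trivially now; the information about $K_{f}$ must come from the $\nabla_{X}$-test.

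Using $\langle\nabla_{X};\func{d}^{G}f\rangle = X(f) = \func{d}f(X)$, and the formulas of Remark~5 extended by $\Omega(M)$-bilinearity from vector fields to vector-valued differential forms, namely
\[
\langle\nabla_{X},\nabla_{K_{f}^{2i}};\Theta_{\omega,g}\rangle = \omega(X,K_{f}^{2i}) - g(R(X,K_{f}^{2i})\_,\_),
\]
I would sum over $i$ and compare with $\func{d}f(X)$. The $\omega$-piece of the $i$-th summand has $\mathbb{Z}$-degree $2i$ while the curvature piece has degree $2i+2$, so separating by degree yields $\omega(X,K_{f}^{0})=\func{d}f(X)$, which gives $K_{f}^{0} = X_{f}$, and, for every $j\geq 1$,
\[
\omega(X,K_{f}^{2j}) = g\bigl(R(X,K_{f}^{2(j-1)})\_,\_\bigr).
\]

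To invert this relation for $K_{f}^{2j}$, I would substitute $\omega(X,Y) = g(JX,Y)$ on the left and apply two standard symmetries of the Riemann tensor on the right: the pair-exchange symmetry $g(R(A,B)C,D)=g(R(C,D)A,B)$ together with the $g$-skew-adjointness of $R(V_{1},V_{2})$. Combined with the $g$-skew-adjointness of $J$, this rewrites both sides as $g(X,\cdot)$ with a common first slot; cancelling $X$ and applying $J^{-1}$ produces the recursion $K_{f}^{2j} = -J^{-1}\bigl(R(\_,\_)K_{f}^{2(j-1)}\bigr)$. Iterating $i$ times starting from $K_{f}^{0} = X_{f}$ then gives the explicit expression in the statement.

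I expect the main obstacle to be bookkeeping rather than a deep difficulty. Remark~5 is stated for derivations $\nabla_{X}$ along honest vector fields $X$, so one must justify carefully the extension $\langle\nabla_{X},\nabla_{S};\Theta_{\omega,g}\rangle = \omega(X,S) - g(R(X,S)\_,\_)$ when $S$ is a vector-valued form, paying attention to the graded right-$\Omega(M)$-module structure of $\mathrm{Der}\,\Omega(M)$ and to the antisymmetrization of the free slots. Once that extension is in place, the sign tracking in the inversion step is the only delicate point.
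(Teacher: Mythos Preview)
Your approach coincides with the paper's: evaluate $\iota_{D_f}\Theta_{\omega,g} = \func{d}^G f$ on derivations $\nabla_Y$, expand the left side using the $\nabla$-basis formulas of Remark~\ref{formulanabla}, separate by $\mathbb{Z}$-degree to obtain $K_f^0 = X_f$ and the relation tying $K_f^{2j}$ to $K_f^{2(j-1)}$, and then invert via $\omega(X,Y)=g(JX,Y)$ together with the symmetries of the Riemann tensor. The paper performs exactly these steps and glosses over the $\Omega(M)$-bilinear extension from vector fields to vector-valued forms that you flag, treating it as routine.
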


\begin{proof}
This time, we study the action of the graded $1$-form $\func{d}%
^{G}f=\iota_{D_{f}}\Omega_{\omega,g}$ on the basic derivations of the type $%
\nabla_{Y}$, $Y\in\mathcal{X}(M)$: 
\begin{equation*}
Y(f)=\left\langle \nabla_{Y};\iota_{D_{f}}\Omega_{\omega,g}\right\rangle
=\omega(Y,K_{f}^{0}+K_{f}^{2}(\_,\_)+...)+g(R(Y,K_{f}^{0}+K_{f}^{2}(\_,%
\_)+...)\_,\_).
\end{equation*}

Note that the right member belongs to $\Omega^{0+2+...}(M).$ Now, equate
terms with the same degree. For degree $0$ ,we have 
\begin{equation*}
Y(f)=\omega(Y,K_{f}^{0})=\func{d}f(Y)=i_{X_{f}}\omega(Y)
\end{equation*}
so 
\begin{equation*}
K_{f}^{0}=X_{f}\in\mathcal{X}(M).
\end{equation*}

Now, the degree $2$ terms give 
\begin{align*}
0 & =\omega(Y,K_{f}^{2}(\_,\_))+g(R(Y,K_{f}^{0})\_,\_)= \\
& =-g(Y,J(K_{f}^{2}(\_,\_)))+R(\_,\_,X_{f},Y)= \\
& =-g(Y,J(K_{f}^{2}(\_,\_)))-g(Y,R(\_,\_)X_{f})
\end{align*}
thus 
\begin{equation*}
J(K_{f}^{2}(\_,\_))=-R(\_,\_)X_{f}
\end{equation*}
and 
\begin{equation*}
K_{f}^{2}=-J^{-1}(R(\_,\_)X_{f}),
\end{equation*}
(we have made use of the Riemann curvature tensor $R(U,V,W,Z)=-g(R(U,V)W,Z)$
and its symmetries).

Iterating this procedure, for the terms of degree $2i$ we get 
\begin{align*}
0 & =\omega(Y,K_{f}^{2i}(\_,...,\_))+g(R(Y,K_{f}^{2(i-1)}(\_,...,\_))\_,\_)=
\\
& =-g(Y,J(K_{f}^{2i}(\_,...,\_)))-g(R(\_,\_)K_{f}^{2(i-1)}(\_,...,\_),Y)
\end{align*}
\newline
and the claim follows.
\end{proof}

Thus, we have proved the following result.

\begin{theorem}
If we write $K_{f}^{(even)}=\sum\limits_{i\in \mathbb{N}}K_{f}^{2i}$ (where $%
K_{f}^{2i}$ is defined in Proposition$17$), then $D_{f}=\nabla
_{K_{f}^{(even)}}.$
\end{theorem}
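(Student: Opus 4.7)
The plan is to assemble this theorem directly from the material already established in this section, since it is essentially a bookkeeping statement that consolidates Lemma~13, the decomposition (\ref{eq6.1}), Lemma~16, and Proposition~17. I expect no serious obstacle; the only question is making sure that the sums can be pulled inside the $\nabla_{(\cdot)}$ slot.

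First I would invoke Lemma~13 to write $D_{f} = \sum_{i \in \mathbb{N}} D_{f}^{2i}$, where the sum is finite because the cohomological degree of a homogeneous piece is bounded by $\dim M$ (as noted in the Remark following Lemma~13). Then I would apply the Fr\"olicher--Nijenhuis type decomposition recalled in (\ref{eq6.1}) to each homogeneous component, obtaining
\begin{equation*}
D_{f}^{2i} = \nabla_{K_{f}^{2i}} + i_{L_{f}^{2i}}
\end{equation*}
for tensors $K_{f}^{2i} \in \Gamma(TM)\otimes\Omega^{2i}(M)$ and $L_{f}^{2i} \in \Gamma(TM)\otimes\Omega^{2i+1}(M)$.

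Next I would invoke Lemma~16, which gives $L_{f} \equiv 0$ and hence $L_{f}^{2i} = 0$ for every $i$. This collapses the expression for $D_{f}$ to
\begin{equation*}
D_{f} = \sum_{i \in \mathbb{N}} \nabla_{K_{f}^{2i}}.
\end{equation*}
Using the $\Omega(M)$-linearity of the map $K \mapsto \nabla_{K}$ (which is built into the definition of $\nabla_{K}$ for $K \in \Gamma(TM)\otimes \Omega(M)$), the sum may be pulled inside to yield $D_{f} = \nabla_{\sum_{i} K_{f}^{2i}} = \nabla_{K_{f}^{(even)}}$, which is precisely the claim. The tensors $K_{f}^{2i}$ themselves are given explicitly by the recursion in Proposition~17, so $K_{f}^{(even)}$ is a well-defined element of $\Gamma(TM)\otimes\Omega^{(even)}(M)$ and the theorem is proved.
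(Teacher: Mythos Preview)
Your proposal is correct and matches the paper's approach exactly: the paper presents this theorem with the single sentence ``Thus, we have proved the following result,'' treating it as the assembly of Lemma~13, the decomposition~(\ref{eq6.1}), Lemma~16, and Proposition~17, which is precisely what you spell out. Your only addition is making explicit the $\Omega(M)$-linearity step that lets the finite sum pass inside $\nabla_{(\cdot)}$, which is routine.
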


Note also that we have obtained in the proof: 
\begin{equation*}
K_{f}^{0}=X_{f}\text{.}
\end{equation*}

Now, the same analysis can be applied to find the graded hamiltonian vector
field of a differential $df,f\in C^{\infty }(M)$. It must verify the
equation 
\begin{equation*}
\iota _{D_{df}}\Omega _{\omega ,g}=\func{d}^{G}df,
\end{equation*}
and so it must have the form 
\begin{equation*}
D_{df}=i_{\sum\limits_{i\in \mathbb{N}}N_{f}^{2i}}+\nabla
_{\sum\limits_{i\in \mathbb{N}}K_{f}^{2i+1}},
\end{equation*}
where $N_{f}^{2i}\in \Gamma (TM)\otimes \Omega ^{2i}(M),$\bigskip $%
K_{f}^{2i+1}\in \Gamma (TM)\otimes \Omega ^{2i+1}(M)$. However, as for the
case of $D_{f}$, this formula can be simplified a lot with some computations
similar to those of the preceding Proposition.

\begin{theorem}
With the preceding notations, 
\begin{equation*}
D_{df}=i_{\natural df}+\nabla _{\sum\limits_{i\in \mathbb{N}%
}K_{f}^{2i+1}}=i_{\natural df}+\nabla _{K_{f}^{(odd)}},
\end{equation*}
where $\natural \equiv g^{-1}$ is the musical isomorphism induced by the
metric.
\end{theorem}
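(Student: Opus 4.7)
The plan is to mirror the two-step strategy used for $D_{f}$ in Proposition $17$, now applied to the odd derivation $D_{df}$ (odd because $df$ has bidegree $(0,1)$). Write
\begin{equation*}
D_{df}=i_{N_{f}^{(even)}}+\nabla_{K_{f}^{(odd)}},
\end{equation*}
with $N_{f}^{(even)}=\sum_{i\ge 0}N_{f}^{2i}$ and $K_{f}^{(odd)}=\sum_{i\ge 0}K_{f}^{2i+1}$ vector-valued forms of the indicated parities. The task reduces to showing $N_{f}^{0}=\natural df$, $N_{f}^{2i}=0$ for $i\ge 1$, and that $K_{f}^{(odd)}$ is uniquely determined by the defining equation. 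I would evaluate the identity $\iota_{D_{df}}\Theta_{\omega,g}=\func{d}^{G}(df)$ separately on the basic derivations $i_{Y}$ and $\nabla_{Y}$ for $Y\in\mathcal{X}(M)$.

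Evaluating on $i_{Y}$: Remark $7$ gives $\langle\nabla_{X},i_{Y};\Theta_{\omega,g}\rangle=0$, which kills the $\nabla_{K_{f}^{(odd)}}$ contribution, while the $\Omega(M)$-bilinear extension of $\langle i_{X},i_{Y};\Theta_{\omega,g}\rangle=g(X,Y)$ yields
\begin{equation*}
g(Y,N_{f}^{(even)})=\langle i_{Y};\func{d}^{G}(df)\rangle=(df)(Y).
\end{equation*}
Matching bidegrees, the degree-$0$ part reads $g(Y,N_{f}^{0})=(df)(Y)$ for all $Y$, hence by definition of the musical isomorphism $N_{f}^{0}=\natural df$; in every higher even degree $2i\ge 2$ the right-hand side vanishes, so $g(Y,N_{f}^{2i})=0$ for all $Y$ and thus $N_{f}^{2i}=0$ by non-degeneracy of $g$.

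Evaluating on $\nabla_{Y}$: Remark $7$ gives $\langle\nabla_{Y},i_{X};\Theta_{\omega,g}\rangle=0$, so the $i_{\natural df}$ contribution disappears, and expanding via $\langle\nabla_{X},\nabla_{Z};\Theta_{\omega,g}\rangle=\omega(X,Z)-g(R(X,Z)\_,\_)$ produces
\begin{equation*}
\omega(Y,K_{f}^{(odd)})-g(R(Y,K_{f}^{(odd)})\_,\_)=\nabla_{Y}(df).
\end{equation*}
Matching the degree-$1$ part uniquely determines $K_{f}^{1}$ through the isomorphism $J$ defined by $\omega(X,Y)=g(JX,Y)$; matching degree $2i+1$ for $i\ge 1$ yields the recursion $\omega(Y,K_{f}^{2i+1})=g(R(Y,K_{f}^{2i-1})\_,\_)$, entirely analogous to that of Proposition $17$ and solvable for $K_{f}^{2i+1}$ via $J^{-1}$. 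This closes the proof.

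The main obstacle is the $\Omega(M)$-bilinear extension of the pairing formulas of Remark $7$ to vector-valued forms, where one must carefully track signs arising from graded antisymmetry of $\Theta_{\omega,g}$. Conceptually, nothing beyond the techniques of Proposition $17$ and Lemma $16$ is required; it is precisely the degree-by-degree vanishing argument (the analogue of $L_{f}\equiv 0$) that forces the $i$-part of $D_{df}$ to collapse to the single degree-$0$ term $i_{\natural df}$.
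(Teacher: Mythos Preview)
Your proposal is correct and follows precisely the approach the paper indicates: the paper does not spell out a proof of this theorem but merely states that ``this formula can be simplified a lot with some computations similar to those of the preceding Proposition,'' i.e.\ the computation of $D_{f}$ via evaluation on the basic derivations $i_{Y}$ and $\nabla_{Y}$ and degree-matching. Your two-step evaluation (on $i_{Y}$ to kill the higher $N_{f}^{2i}$ and identify $N_{f}^{0}=\natural df$, then on $\nabla_{Y}$ to set up the recursion for the $K_{f}^{2i+1}$) is exactly that analogue, and your remark that the only delicate point is the sign bookkeeping in the $\Omega(M)$-bilinear extension is apt.
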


Now, with the same techniques as in the case of the $K_{f}^{2i}$'s, we can
obtain a recursion formula for the tensors $K_{f}^{2i+1}$.

\begin{proposition}
With the preceding notations, we have 
\begin{equation*}
K_{f}^{2(i+1)+1}=J^{-1}(R(\_,\_)K_{f}^{2i+1}),\text{ }for\text{ }i>0
\end{equation*}
and 
\begin{equation*}
\omega (X,K_{f}^{1})=\nabla _{X}(df).
\end{equation*}
\end{proposition}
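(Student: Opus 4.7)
My plan is to emulate the strategy of Proposition~17 for the odd-degree case. By Theorem~19 we already know that $D_{df}=i_{\natural df}+\nabla_{K_f^{(odd)}}$ with $K_f^{(odd)}=\sum_{i\geq 0}K_f^{2i+1}$, so the task reduces to determining the tensors $K_f^{2i+1}\in\Gamma(TM)\otimes\Omega^{2i+1}(M)$ from the Hamiltonian equation $\iota_{D_{df}}\Theta_{\omega,g}=\func{d}^G(df)$. I would do this by evaluating both sides on the basic derivations $\nabla_Y$ with $Y\in\mathcal{X}(M)$ and matching form-degrees.

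On the left, $\langle\nabla_Y;\func{d}^G(df)\rangle=\nabla_Y(df)\in\Omega^1(M)$. On the right, using the formulae of Remark~10 in the basis $\{\nabla_X,i_X\}$, the piece coming from $i_{\natural df}$ disappears because $\langle\nabla_X,i_Z;\Theta_{\omega,g}\rangle=0$, so only the terms involving $\nabla_{K_f^{2i+1}}$ survive. Substituting the Remark~10 formula and expanding $\Omega(M)$-multilinearly yields
\begin{equation*}
\nabla_Y(df)=\sum_{i\geq 0}\bigl[\omega(Y,K_f^{2i+1})-g(R(Y,K_f^{2i+1})\_,\_)\bigr],
\end{equation*}
where the first summand in each bracket has form-degree $2i+1$ and the curvature summand has form-degree $2i+3$.

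Matching degrees in this equality then gives one equation per degree. The degree-$1$ piece, where only $K_f^{1}$ contributes (no curvature term is present yet), yields the base case $\nabla_Y(df)=\omega(Y,K_f^1)$. For $i\geq 0$, the degree-$(2i+3)$ piece is $0=\omega(Y,K_f^{2(i+1)+1})-g(R(Y,K_f^{2i+1})\_,\_)$. Using the identity $\omega(Y,Z)=g(JY,Z)=-g(Y,JZ)$ (the $g$-skewness of $J$ forced by antisymmetry of $\omega$) together with the standard symmetries $g(R(A,B)C,D)=-R(A,B,C,D)=-R(C,D,A,B)$ of the Riemann tensor, as in Proposition~17, this rearranges to $J(K_f^{2(i+1)+1})=R(\_,\_)K_f^{2i+1}$, hence the stated recursion $K_f^{2(i+1)+1}=J^{-1}(R(\_,\_)K_f^{2i+1})$.

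The main bookkeeping obstacle is twofold: first, one must carefully invoke $\Omega(M)$-multilinearity when upgrading the Remark~10 formula from plain vector fields to the vector-valued forms $K_f^{2i+1}$; second, one must track the sign coming from the odd parity of each $\nabla_{K_f^{2i+1}}$ (and of $D_{df}$ itself), which is the feature distinguishing this computation from that of Proposition~17. Once these are handled cleanly, the argument is essentially a transcription of the even-degree proof with the parities shifted by one.
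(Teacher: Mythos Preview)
Your proposal is correct and follows exactly the approach the paper intends: the paper does not write out a separate proof for this proposition but simply says ``with the same techniques as in the case of the $K_{f}^{2i}$'s, we can obtain a recursion formula for the tensors $K_{f}^{2i+1}$,'' i.e.\ evaluate $\iota_{D_{df}}\Theta_{\omega,g}=\func{d}^{G}(df)$ on $\nabla_{Y}$ using the $\{\nabla_X,i_X\}$ basis of Remark~\ref{formulanabla} and match form-degrees. Your outline is thus a faithful expansion of the paper's omitted argument, and your remark that the odd parity of $\nabla_{K_f^{2i+1}}$ is what produces the sign difference between this recursion (no minus) and that of Proposition~17 (with a minus) is the one substantive point worth recording.
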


\begin{remark}
The formulae we have obtained, are general and make no assumptions on $%
g,\omega$ or $J$. But for the cases of interest, some compatibility relation
between $g$ and $\omega$ exists, which translates into a condition on $J$;
the most common one, is to consider K\"{a}hler manifolds, and then the
equivalent conditions $\nabla\omega=0$, $\nabla J=0$, are satisfied. From
now on, we will take $(M,g,\omega,J)$ a K\"{a}hler manifold, unless
otherwise stated.
\end{remark}

We know that the recursion formula for the tensors $K_{f}^{2i}$ begins with $%
K_{f}^{0}=X_{f}$. Now, under the hypothesis of K\"{a}hler manifolds, we will
deduce the first term of the $K_{f}^{2i+1}$'s.

\begin{remark}
We will make use of the definition and properties of the exterior covariant
derivative $d^{\nabla}$, which is a derivation of the algebra of
vector-valued differential forms (see \cite{[9]}).
\end{remark}

\begin{lemma}
With the preceding notations 
\begin{equation*}
K_{f}^{1}=-d^{\nabla }X_{f}.
\end{equation*}
\end{lemma}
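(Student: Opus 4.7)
The plan is to use the characterizing equation for $K_f^1$ given in Proposition~20, namely $\omega(X,K_f^1) = \nabla_X(df)$ for all $X\in\mathcal{X}(M)$, and combine it with the Kähler identities $\nabla\omega = 0$ and $\nabla J = 0$ to identify $K_f^1$ with $-d^\nabla X_f$. Since $X_f \in \Gamma(TM)$ can be regarded as a $TM$-valued $0$-form, the exterior covariant derivative acts by $(d^\nabla X_f)(Y)=\nabla_Y X_f$, so the target identity is the pointwise statement $K_f^1(Y) = -\nabla_Y X_f$ for every vector field $Y$, and by non-degeneracy of $\omega$ this is equivalent to
\[
\omega(X,K_f^1)(Y) \;=\; -\,\omega(X,\nabla_Y X_f) \;=\; \omega(\nabla_Y X_f, X)
\quad \text{for all } X,Y\in\mathcal{X}(M).
\]

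First, I would rewrite the right-hand side of Proposition~20 using $\nabla\omega=0$ and $i_{X_f}\omega = df$. Expanding $(\nabla_X df)(Y) = X(df(Y)) - df(\nabla_X Y)$ and substituting $df(Z)=\omega(X_f,Z)$, Kähler compatibility $X(\omega(X_f,Y)) = \omega(\nabla_X X_f,Y)+\omega(X_f,\nabla_X Y)$ collapses the expression to
\[
\nabla_X(df)(Y) \;=\; \omega(\nabla_X X_f,Y).
\]
Hence the defining equation of $K_f^1$ becomes $\omega(X,K_f^1(Y)) = \omega(\nabla_X X_f,Y)$.

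The remaining step is to verify the symmetry $\omega(\nabla_X X_f,Y) = \omega(\nabla_Y X_f,X)$. This is exactly the content of $X_f$ being a symplectic vector field: expanding $(\mathcal{L}_{X_f}\omega)(X,Y)$ with the torsion-free $\nabla$ and $\nabla\omega=0$, the $\nabla_{X_f}$ terms cancel against the bracket contributions and one is left with
\[
(\mathcal{L}_{X_f}\omega)(X,Y) \;=\; \omega(\nabla_X X_f,Y) - \omega(\nabla_Y X_f,X),
\]
which vanishes because $X_f$ is Hamiltonian. Combining this with the previous step yields $\omega(X,K_f^1(Y)) = \omega(\nabla_Y X_f,X) = -\omega(X,\nabla_Y X_f)$, and non-degeneracy of $\omega$ concludes $K_f^1(Y) = -\nabla_Y X_f = -(d^\nabla X_f)(Y)$.

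The main obstacle, such as it is, is to correctly reduce the Proposition~20 identity using the Kähler hypothesis; once this is done, the needed symmetry of $\omega(\nabla_\cdot X_f,\cdot)$ is automatic from $\mathcal{L}_{X_f}\omega=0$ rather than requiring any separate curvature computation. I do not expect to need $\nabla J=0$ explicitly beyond what is already built into $\nabla\omega=0$, since the whole argument can be phrased in terms of $\omega$ and $\nabla$ alone.
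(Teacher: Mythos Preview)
Your argument is correct and follows essentially the same route as the paper: both start from the characterization $\omega(Y,K_f^1)=\nabla_Y(df)$ of Proposition~20, use $\nabla\omega=0$ to rewrite one side in terms of $\nabla X_f$, and then close the gap via torsion-freeness of the Levi--Civita connection. The only cosmetic difference is that you package the torsion step as the identity $(\mathcal{L}_{X_f}\omega)(X,Y)=\omega(\nabla_X X_f,Y)-\omega(\nabla_Y X_f,X)=0$, whereas the paper subtracts the two expressions directly and obtains $[Y,U](f)-(\nabla_Y U-\nabla_U Y)(f)=0$; these are the same computation.
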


\begin{proof}
The tensor $K_{f}^{1}$ is characterized by the condition 
\begin{equation*}
\omega(Y,K_{f}^{1})=\nabla_{Y}(df),\text{ }Y\in\mathcal{X}(M),
\end{equation*}
and this is an identity between $1-$forms, so we make it act upon a vector
field $U\in\mathcal{X}(M)$: 
\begin{align*}
\omega(Y,K_{f}^{1}(U)) & =(\nabla_{Y}(df))(U)= \\
& =\nabla_{Y}(df(U))-df(\nabla_{Y}(U))= \\
& =Y(U(f))-(\nabla_{Y}U)(f).
\end{align*}

On the other hand: 
\begin{equation*}
\omega(Y,-(d^{\nabla}X_{f})(U))=\omega(Y,\nabla_{U}X_{f}),
\end{equation*}
and with $M$ K\"{a}hler ($d^{\nabla}\omega=0$): 
\begin{align*}
-\omega(Y,\nabla_{U}X_{f}) & =-\nabla_{U}(\omega(Y,X_{f}))+\omega(\nabla
_{U}Y,X_{f})= \\
& =U(Y(f))-(\nabla_{U}Y)(f).
\end{align*}

Thus, 
\begin{equation*}
\omega(Y,K_{f}^{1}(U))-\omega(Y,-(d^{\nabla}X_{f})(U))=0,
\end{equation*}
and, from the non degeneracy of $\omega$, $K_{f}^{1}=-d^{\nabla}X_{f}.$
\end{proof}

This result could also be written 
\begin{equation*}
K_{f}^{1}=-d^{\nabla}K_{f}^{0},
\end{equation*}
and so we can ask if a similar relation holds between the tensors $%
K_{f}^{2i+1}$ and $K_{f}^{2i}$ for any $i>0$.

\begin{proposition}
With the preceding notations, 
\begin{equation*}
K_{f}^{2i+1}=(-1)^{i+1}d^{\nabla }K_{f}^{2i}.
\end{equation*}
\end{proposition}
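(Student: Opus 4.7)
The plan is to prove the formula by induction on $i$, using the recursions already established for the $K_f^{2i}$'s and $K_f^{2i+1}$'s together with the K\"ahler condition $\nabla J = 0$ and the second Bianchi identity $d^{\nabla}R = 0$.

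For the base case $i=0$, the identity $K_f^1 = (-1)^{0+1} d^{\nabla} K_f^0 = -d^{\nabla} X_f$ is exactly the content of Lemma $22$, since $K_f^0 = X_f$.

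For the inductive step, I assume $K_f^{2i+1} = (-1)^{i+1} d^{\nabla} K_f^{2i}$ and apply the odd recursion formula from the preceding proposition, namely $K_f^{2(i+1)+1} = J^{-1}(R(\_,\_) K_f^{2i+1})$, where $R(\_,\_) \alpha$ stands for the natural wedge product $R \wedge \alpha$ of the $\operatorname{End}(TM)$-valued $2$-form $R$ with the $TM$-valued form $\alpha$. Substituting the inductive hypothesis gives
\begin{equation*}
K_f^{2i+3} = (-1)^{i+1} J^{-1}\bigl(R \wedge d^{\nabla} K_f^{2i}\bigr).
\end{equation*}
The graded Leibniz rule for $d^{\nabla}$ together with the second Bianchi identity $d^{\nabla} R = 0$ yields $d^{\nabla}(R \wedge K_f^{2i}) = R \wedge d^{\nabla} K_f^{2i}$, so I can move the $d^{\nabla}$ outside. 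Next, since $(M,g,\omega,J)$ is K\"ahler, $\nabla J = 0$ and hence $\nabla J^{-1} = 0$, which means $J^{-1}$ commutes with $d^{\nabla}$. This produces
\begin{equation*}
K_f^{2i+3} = (-1)^{i+1} d^{\nabla}\bigl(J^{-1}(R \wedge K_f^{2i})\bigr) = (-1)^{i+2} d^{\nabla}\bigl(-J^{-1}(R \wedge K_f^{2i})\bigr).
\end{equation*}
Finally, applying the even recursion $K_f^{2(i+1)} = -J^{-1}(R \wedge K_f^{2i})$ from Proposition $17$ identifies the bracket on the right as $K_f^{2i+2}$, giving $K_f^{2i+3} = (-1)^{i+2} d^{\nabla} K_f^{2i+2}$, which is the desired formula at level $i+1$.

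The only nontrivial point is to make the manipulations on vector-valued forms rigorous: one must check that the meaning of $R(\_,\_)$ as appearing in the recursion is precisely the wedge of $R \in \Omega^2(M; \operatorname{End}(TM))$ with a $TM$-valued form (so that Leibniz and the Bianchi identity apply in the expected way), and that $\nabla J = 0$ really implies $[d^{\nabla}, J^{-1}] = 0$ on such forms. I expect the latter compatibility to be the main technical point worth verifying, though both follow from standard computations in tensor calculus on a K\"ahler manifold.
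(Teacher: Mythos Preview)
Your proof is correct and follows essentially the same route as the paper: induction on $i$, with the base case given by the lemma $K_f^1=-d^\nabla X_f$, and the inductive step obtained by inserting the hypothesis into the odd recursion, using $d^\nabla R=0$ and $\nabla J=0$ (hence $d^\nabla J=0$) to pull $d^\nabla$ outside, and then recognizing the even recursion $K_f^{2(i+1)}=-J^{-1}(R(\_,\_)K_f^{2i})$. Your added care in interpreting $R(\_,\_)K$ as a wedge of an $\operatorname{End}(TM)$-valued form with a $TM$-valued form, so that Leibniz and Bianchi apply, is exactly the justification the paper leaves implicit.
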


\begin{proof}
We will proceed by induction (the case $i=0$ has been just proved). Assuming
that 
\begin{equation*}
K_{f}^{2i+1}=(-1)^{i+1}d^{\nabla}K_{f}^{2i}
\end{equation*}
we have, from the recursion formula for the $K_{f}^{2i+1}$'s: 
\begin{equation*}
K_{f}^{2(i+1)+1}=J^{-1}(R(\_,\_)K_{f}^{2i+1})=(-1)^{i+1}J^{-1}(R(\_,\_)d^{%
\nabla}K_{f}^{2i}).
\end{equation*}

Now, by the properties of a K\"{a}hler manifold ($d^{\nabla }J=0=d^{\nabla
}R $), 
\begin{align*}
K_{f}^{2(i+1)+1}& =(-1)^{i+1}d^{\nabla }(J^{-1}(R(\_,\_)K_{f}^{2i}))= \\
& =(-1)^{i+1}d^{\nabla }(-K_{f}^{2(i+1)})= \\
& =(-1)^{(i+1)+1}d^{\nabla }K_{f}^{2(i+1)}
\end{align*}
\end{proof}

\subsection{Explicit expressions for the Poisson brackets}

Now we can give the general explicit expressions for the Poisson brackets
corresponding to the graded symplectic form $\Omega_{\omega,g}$. The
computations leading to these are quite tedious, but straightforward, so we
will omite them.

If $\alpha,\beta\in\Omega(M)$ are differential forms, their Poisson bracket
induced by $\Omega_{\omega,g}$ is denoted by $[\![\alpha,\beta]\!]_{\Omega
_{\omega,g}}$. By linearity and Leibniz property, in order to compute it we
only need the bracket between functions and/or exact $1-$forms, and the
formulae are ($f,h\in C^{\infty}(M)$): 
\begin{align*}
\lbrack\![f,h]\!]_{\Omega_{\omega,g}} & =\{f,h\}+\sum\limits_{i\in\mathbb{N}%
}R(K_{f}^{2i},X_{h},\_,\_) \\
\lbrack\![f,dh]\!]_{\Omega_{\omega,g}} & =d\{f,h\}-\sum\limits_{i\in \mathbb{%
N}}\omega(d^{\nabla}K_{f}^{2i},X_{h}) \\
& +\sum\limits_{i\in\mathbb{N}}(R(d^{\nabla}K_{f}^{2i},X_{h},\_,%
\_)+R(K_{f}^{2i},d^{\nabla}X_{h},\_,\_)) \\
\lbrack\![df,dh]\!]_{\Omega_{\omega,g}} & =g^{-1}(df,dh) \\
& +g(d^{\nabla}\#df,d^{\nabla}X_{h})+R(\#df,X_{h},\_,\_)+\sum\limits_{i\in 
\mathbb{N}}\omega(X_{h},d^{\nabla}K_{f}^{2i+1}) \\
& +\sum\limits_{i\in\mathbb{N}}(R(d^{\nabla}K_{f}^{2i+1},X_{h},\_,%
\_)+R(K_{f}^{2i+1},d^{\nabla}X_{h},\_,\_)).
\end{align*}

Here, $\{\_,\_\}$ is the Poisson bracket associated to $\omega\in\Omega
^{2}(M)$.

\section{The graded Hamiltonian vector field associated to a symplectic form
on the base manifold}

In this section we want to compute a special graded Hamiltonian vector
field. If we have a base manifold $(M,\omega,g)$, a distinguished graded
Hamiltonian function is $\omega\in\Omega^{2}(M)$, so it is natural to ask
what is its corresponding graded Hamiltonian vector field. We know that in
the odd case, this is nothing but $\limfunc{d}$, and also we know that this
is not possible in the even one. In the introduction, we have said that $%
i_{J}$ (with $J$ given by $\omega(X,Y)=g(JX,Y)$) plays the r\^{o}le of $%
\limfunc{d}$, and we will prove now this fact completing the table appearing
there.

\begin{lemma}
\label{lema} Let $g$ be a pseudoriemannian metric on a differentiable
manifold $M$ and let $J\in\Omega^{1}(M;TM)$ such that 
\begin{equation*}
g(JX,Y)=-g(X,JY),
\end{equation*}
then, for all $X,Y,Z\in\mathcal{X}(M)$. 
\begin{equation*}
g((\nabla_{X}J)Y,Z)=g((\nabla_{X}J)Z,Y),
\end{equation*}
where $\nabla$ is the Levi-Civita connection associated to $g$.
\end{lemma}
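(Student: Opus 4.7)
The strategy is to differentiate the skew-symmetry hypothesis, rewritten as $g(JY,Z) + g(Y,JZ) = 0$, along the vector field $X$ and use metric compatibility $\nabla g = 0$ to turn the directional derivative into covariant derivatives. Applying $X$ to this scalar identity yields
$$
g(\nabla_X(JY),Z) + g(JY,\nabla_X Z) + g(\nabla_X Y, JZ) + g(Y, \nabla_X(JZ)) = 0.
$$

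The next step is to isolate the $\nabla_X J$ contributions. I would expand $\nabla_X(JY) = (\nabla_X J)Y + J\nabla_X Y$ and $\nabla_X(JZ) = (\nabla_X J)Z + J\nabla_X Z$ via the Leibniz rule, substitute back, and regroup the result into two kinds of terms: (i) the pieces containing $\nabla_X J$ acting on $Y$ or $Z$, and (ii) four ``auxiliary'' pieces in which $J$ is applied either to $\nabla_X Y$ or to $\nabla_X Z$. The crucial observation is that the four auxiliary terms cancel in two pairs by the very hypothesis itself, applied once with arguments $(\nabla_X Y, Z)$ and once with $(Y, \nabla_X Z)$: namely $g(J\nabla_X Y, Z) + g(\nabla_X Y, JZ) = 0$ and $g(JY, \nabla_X Z) + g(Y, J\nabla_X Z) = 0$. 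What remains is an identity relating $g((\nabla_X J)Y, Z)$ and $g(Y,(\nabla_X J)Z)$, and by the symmetry of $g$ the second equals $g((\nabla_X J)Z,Y)$, which is the claimed identity.

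The main obstacle is purely bookkeeping — pairing the four leftover terms correctly and carefully tracking the sign that is transferred from the skew-symmetry of $J$ onto the symmetry type of $\nabla_X J$. No deeper geometric ingredient (curvature, torsion, or the full Koszul formula) is needed beyond $\nabla g = 0$ and the product rule; the argument reduces to a pointwise linear-algebraic identity at each tangent space.
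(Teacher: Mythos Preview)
Your approach is exactly the ``straightforward computation'' the paper has in mind: differentiate the skew-symmetry relation, use $\nabla g=0$, expand via the Leibniz rule for $\nabla_X(JY)$, and cancel the four auxiliary terms pairwise using the hypothesis again. That is the right (and essentially only) method.

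There is, however, a sign slip in your last sentence. After the cancellations one is left with
\[
g((\nabla_X J)Y,Z)+g(Y,(\nabla_X J)Z)=0,
\]
so by the symmetry of $g$ one obtains $g((\nabla_X J)Y,Z)=-g((\nabla_X J)Z,Y)$, i.e.\ $\nabla_X J$ is \emph{skew}-symmetric with respect to $g$, not symmetric. This is the expected outcome: if $J^{*}=-J$ (adjoint with respect to $g$) and $\nabla g=0$, then $(\nabla_X J)^{*}=\nabla_X(J^{*})=-\nabla_X J$. The printed statement of the lemma therefore appears to carry a sign typo; your computation proves the correct antisymmetric version, and you should flag the discrepancy rather than claim it matches as written.
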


\begin{proof}
A straightforward computation.
\end{proof}

\begin{theorem}
Let $\omega$ be a symplectic form and $g$ a pseudoriemannian metric on a
differentiable manifold, $M$. Let $\Theta_{\omega,g}$ be the associated even
symplectic form. The graded Hamiltonian vector field associated to the
graded function $\omega\in\Omega^{2}(M)$ is $D_{\omega}=i_{J}$ with $%
J\in\Omega ^{1}(M;TM)$ defined by $\omega(X,Y)=g(JX,Y)$ for all $X,Y\in%
\mathcal{X}(M)$.
\end{theorem}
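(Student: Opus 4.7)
The strategy is to check the graded $1$-form identity $\iota_{i_J}\Theta_{\omega,g} = d^G\omega$ on a local basis of derivations. Since $\mathrm{Der}\,\Omega(M)$ is locally free over $\Omega(M)$ with basis $\{i_X,\nabla_X\}_{X\in\mathcal{X}(M)}$ supplied by Fr\"olicher--Nijenhuis, it suffices to verify $\langle D;\iota_{i_J}\Theta_{\omega,g}\rangle = D(\omega)$ for $D = i_Y$ and for $D = \nabla_Y$, with $Y\in\mathcal{X}(M)$ arbitrary. Locally expanding $J = \sum_a \eta^a\otimes X_a$ in a frame, we have $i_J = \sum_a \eta^a\cdot i_{X_a}$ as a degree-zero derivation, so the graded $\Omega(M)$-bilinearity of $\Theta_{\omega,g}$ reduces the left-hand side to the clean values of Remark~\ref{formulanabla}.

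For $D = \nabla_Y$, the vanishing $\langle\nabla_Y,i_{X_a};\Theta_{\omega,g}\rangle = 0$ from Remark~\ref{formulanabla} immediately forces $\langle\nabla_Y, i_J;\Theta_{\omega,g}\rangle = 0$. For $D = i_Y$, the formula $\langle i_Y,i_{X_a};\Theta_{\omega,g}\rangle = g(Y,X_a)$ together with careful sign-tracking of the bilinearity gives $\langle i_Y, i_J;\Theta_{\omega,g}\rangle$ with value $-g(Y, JZ)$ on $Z\in\mathcal{X}(M)$; the defining identity $\omega(X,Y) = g(JX,Y)$ and the induced $g$-antisymmetry $g(JX,Y) = -g(X,JY)$ then turn this into $\omega(Y,Z)$, matching $\langle i_Y; d^G\omega\rangle = i_Y\omega = \omega(Y,\cdot)$.

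To close the argument, I must also show that $\langle\nabla_Y; d^G\omega\rangle = \nabla_Y\omega$ vanishes. A direct calculation, differentiating $\omega = g(J\cdot,\cdot)$, yields $(\nabla_Y\omega)(U,V) = g((\nabla_Y J)U,V)$. Lemma~\ref{lema} asserts the $g$-symmetry of $\nabla_Y J$; on the other hand, differentiating $g(JX,Y) = -g(X,JY)$ and using metric compatibility gives the $g$-antisymmetry of $\nabla_Y J$. Combining the two via the non-degeneracy of $g$ forces $\nabla_Y J = 0$, hence $\nabla_Y\omega = 0$, so the identity holds on $\nabla_Y$ as well, and thus on every basic derivation. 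The main obstacle is the graded sign bookkeeping in the $\Omega(M)$-bilinearity of $\Theta_{\omega,g}$ as $\eta^a$ is pulled out of the second slot in $i_J = \sum_a\eta^a\cdot i_{X_a}$; by contrast, the role of Lemma~\ref{lema} in killing the potential obstruction $\nabla_Y\omega$ is conceptually immediate once the verification has been reduced to basic derivations.
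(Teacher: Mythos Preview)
Your approach via the $\{\nabla_Y,i_Y\}$ basis and Remark~\ref{formulanabla} is a reasonable alternative to the paper's use of $\{\mathcal{L}_Y,i_Y\}$, and the $i_Y$ verification is fine. The fatal error is the step where you conclude $\nabla_Y J=0$ (hence $\nabla_Y\omega=0$) for \emph{arbitrary} $\omega$ and $g$. This is simply false: take $M=\mathbb{R}^2$ with the Euclidean metric $g=dx^2+dy^2$ and the symplectic form $\omega=f\,dx\wedge dy$ for a non-constant positive function $f$; then $J\partial_x=f\partial_y$, $J\partial_y=-f\partial_x$, and $(\nabla_{\partial_x}J)\partial_x=f_x\,\partial_y\neq 0$, so $\nabla_{\partial_x}\omega=f_x\,dx\wedge dy\neq 0$.

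What went wrong is that you combined two statements that are not independent. Your own differentiation of $g(JX,Y)=-g(X,JY)$ correctly yields the $g$-\emph{anti}symmetry $g((\nabla_Z J)U,V)=-g((\nabla_Z J)V,U)$. But Lemma~\ref{lema} is obtained by exactly the same differentiation; read carefully, it must give the same antisymmetry, and the ``$=$'' in its displayed conclusion is evidently a sign slip for ``$=-$''. You have therefore used a typo against a correct computation to deduce $\nabla_Y J=0$, which fails in the example above. In the paper's own argument, Lemma~\ref{lema} is not used to force $\nabla_Y\omega=0$; it is invoked (with the correct sign) inside the $\mathcal{L}_Y$ computation to cancel the cross terms $i_J\bigl(d\flat(Y)\bigr)+\langle i_{\mathcal{L}_YJ};\lambda_g\rangle$, which after simplification equal $g((\nabla_Y J)U,V)-g((\nabla_Y J)V,U)$. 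Your route through $\nabla_Y$ cannot be completed by proving $\nabla_Y\omega=0$, because that identity does not hold in general.
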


\begin{proof}
Let us check that, as graded $1-$forms, 
\begin{equation*}
\iota_{i_{J}}\Theta_{\omega,g}=\limfunc{d}\nolimits^{G}\omega.
\end{equation*}

Given a vector field $Y$, thanks to Proposition $6$, 
\begin{align*}
\left\langle i_{J},i_{Y};\Theta_{\omega,g}\right\rangle & =g(J,Y)=-\omega
(\_,Y)= \\
& =-\omega(Y,\_)=-i_{Y}\omega=-\left\langle i_{Y};\limfunc{d}%
\nolimits^{G}\omega\right\rangle .
\end{align*}

before continuing with the computations, note that a consequence of the
definition of $J$ is that 
\begin{equation}
g(JX,Y)=-g(X,JY),  \label{eq5.1}
\end{equation}
for all $X,Y\in\mathcal{X}(M)$. Moreover, this implies that 
\begin{equation}
\iota_{i_{J}}\lambda_{g}=2\omega.  \label{eq5.2}
\end{equation}

Indeed, $(\iota_{i_{J}}\lambda_{g})(X,Y)=<i_{J};%
\lambda_{g}>(X,Y)=g(JX,Y)-g(JY,X)=2\omega(X,Y)$.

Now, for any vector field $Y$, by the definition of the graded even
symplectic form and the graded exterior derivative, 
\begin{align*}
\left\langle \mathcal{L}_{Y},i_{J};\Theta_{\omega,g}\right\rangle &
=\left\langle \mathcal{L}_{Y},i_{J};\frac{1}{2}\limfunc{d}%
\nolimits^{G}\lambda_{g}\right\rangle = \\
& =\frac{1}{2}(\mathcal{L}{}_{Y}<i_{J};\lambda_{g}>-i_{J}<\mathcal{L}%
{}_{Y};\lambda_{g}>-<[{}\mathcal{L}_{Y},i_{J}];\lambda_{g}>)= \\
& =\mathcal{L}{}_{Y}\omega-\frac{1}{2}i_{J}(g(Y,\_))-\frac{1}{2}<i_{{}%
\mathcal{L}_{Y}J};\lambda_{g}>,
\end{align*}
where we have applied (\ref{eq3.1}), (\ref{eq5.2}) and the formula for the
commutator of the derivations ${}\mathcal{L}_{Y}$ and $i_{J}$.

A long but straightforward computation using Lemma $13$, by virtue of (\ref
{eq5.1}), shows that the terms $i_{J}(g(Y,\_))+<i_{{}\mathcal{L}%
_{Y}J};\lambda_{g}>$ vanish. Therefore, for all $Y\in\mathcal{X}(M)$, 
\begin{align*}
\left\langle i_{Y},i_{J};\Theta_{\omega,g}\right\rangle & =\left\langle
i_{Y};\limfunc{d}\nolimits^{G}\omega\right\rangle \\
\left\langle \mathcal{L}_{Y},i_{J};\Theta_{\omega,g}\right\rangle &
=\left\langle \mathcal{L}_{Y};\limfunc{d}\nolimits^{G}\omega \right\rangle .
\end{align*}

This implies that for all derivations $D\in\limfunc{Der}\Omega (M)$,\newline
$<D,i_{J};\Theta_{\omega,g}>=<D;\limfunc{d}{^{G}}\omega>,$ i.e, $%
\iota_{i_{J}}\Theta_{\omega,g}=\limfunc{d}{^{G}}\omega.$
\end{proof}

We have just seen that a subset of the even graded symplectic forms of $%
\mathbb{Z}-$degree $(0)+(2)$ has the property that $i_{J}$ is a Hamiltonian
vector field. Now, we give a result telling us how the most general forms in
which this property remains true are.

\begin{theorem}
Let $\Theta=\Theta_{\omega}+\frac{1}{2}\func{d}^{G}\lambda_{g,L}$ be an
arbitrary even symplectic form of $\mathbb{Z}-$degree $(0)+(2)$. Then, $%
i_{J} $ is a locally Hamiltonian graded vector field for $\Theta$ if and
only if $\ L$ is such that $L(X;JY,Z)=-L(X;Y,JZ)$ for all $X,Y,Z\in\mathcal{X%
}(M)$, where $J\in\Omega^{1}(M;TM)$ defined by $\omega(X,Y)=g(JX,Y)$.
\end{theorem}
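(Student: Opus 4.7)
The plan is to decompose $\Theta = \Theta_{\omega,g} + \tfrac{1}{2}\func{d}^{G}\lambda_L$ (separating the metric contribution from the $L$-piece) and apply the previous theorem to the first summand, which gives $\iota_{i_J}\Theta_{\omega,g} = \func{d}^{G}\omega$. This yields
\begin{equation*}
\iota_{i_J}\Theta = \func{d}^{G}\omega + \tfrac{1}{2}\,\iota_{i_J}\func{d}^{G}\lambda_L,
\end{equation*}
and since $\func{d}^{G}\omega$ is already $\func{d}^{G}$-exact, the locally Hamiltonian condition $\mathcal{L}^{G}_{i_J}\Theta = 0$ (equivalently, $\iota_{i_J}\Theta$ being $\func{d}^{G}$-closed) reduces to $\iota_{i_J}\func{d}^{G}\lambda_L$ being $\func{d}^{G}$-closed.

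The second step is to evaluate $\iota_{i_J}\func{d}^{G}\lambda_L$ on the two types of basic derivations $i_Y$ and $\mathcal{L}_Y$, $Y\in\mathcal{X}(M)$, using the formula for $\func{d}^{G}$ on a graded $1$-form together with the defining properties $\langle i_X;\lambda_L\rangle = 0$ and $\langle \mathcal{L}_X;\lambda_L\rangle = L(X;\_,\_)$. The bracket terms that appear involve $[i_Y,i_J]$ and $[\mathcal{L}_Y,i_J]$, which by the Fr\"olicher--Nijenhuis calculus equal the insertions $i_{JY}$ and $i_{\mathcal{L}_Y J}$ respectively; being of pure $i$-type, they pair trivially with $\lambda_L$. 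The outcome I expect is
\begin{equation*}
\langle i_Y;\iota_{i_J}\func{d}^{G}\lambda_L\rangle = 0, \qquad \langle \mathcal{L}_Y;\iota_{i_J}\func{d}^{G}\lambda_L\rangle = -\,i_J\bigl(L(Y;\_,\_)\bigr).
\end{equation*}

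To finish, note that $\iota_{i_J}\func{d}^{G}\lambda_L$ has bidegree $(1,2)$, so Proposition~\ref{corKos} applies: closed means exact, $\iota_{i_J}\func{d}^{G}\lambda_L = \func{d}^{G}\eta$ for some $\eta\in\Omega^{2}(M)$. Pairing against $i_Y$ gives $i_Y\eta = 0$ for every $Y\in\mathcal{X}(M)$, which forces $\eta = 0$; hence ``closed'' is actually equivalent to ``zero''. Therefore $i_J$ is locally Hamiltonian iff $i_J\bigl(L(Y;\_,\_)\bigr) = 0$ for every $Y$, and expanding $(i_J L(Y))(A,B) = L(Y;JA,B) + L(Y;A,JB)$ this is precisely the claimed identity $L(X;JY,Z) = -L(X;Y,JZ)$. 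The converse is immediate: under that identity $\iota_{i_J}\Theta = \func{d}^{G}\omega$, so $i_J$ is in fact (globally) Hamiltonian.

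The main obstacle is the graded-sign bookkeeping in $\func{d}^{G}$ on a $1$-form and the correct identification of the commutators $[i_Y,i_J]$ and $[\mathcal{L}_Y,i_J]$ as Fr\"olicher--Nijenhuis insertions; once these are in hand, the whole computation collapses linearly onto the single tensorial equation $i_J L = 0$, from which the antisymmetry condition follows by inspection.
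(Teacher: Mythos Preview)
Your proof is correct and follows essentially the same route as the paper: decompose $\Theta=\Theta_{\omega,g}+\tfrac{1}{2}\func{d}^{G}\lambda_{L}$, use the previous theorem to reduce to the closedness of $\iota_{i_J}\func{d}^{G}\lambda_{L}$, evaluate this $1$-form on $i_Y$ and $\mathcal{L}_Y$ (the paper's equations (5.5) and (5.6)), invoke Kostant's exactness result to force $\eta=0$, and read off the condition $i_J L=0$. The paper's presentation of the forward implication is slightly more discursive, but the computations and the logical skeleton are identical.
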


\begin{proof}
Recall that, as we mentioned at the end of the preceding section, any even
symplectic form of $\mathbb{Z}-$degree $(0)+(2)$ is uniquely determined by a
symplectic form $\omega$, a pseudoriemannian metric $g$, and a tensor field $%
L:TM\longrightarrow\Lambda^{2}T^{\ast}M$. For the particular case we are
considering, we have 
\begin{equation}
\Theta=\Theta_{\omega,g,L}=\Theta_{\omega}+\frac{1}{2}\limfunc{d}{^{G}}%
\lambda_{(g,L)}=\Theta_{\omega,g}+\frac{1}{2}\limfunc{d}{^{G}}%
\lambda_{(0,L)}.
\end{equation}

If $i_{J}$ is a locally Hamiltonian vector field, then 
\begin{align}
0 & =\mathcal{L}_{i_{J}}^{G}\Theta=\limfunc{d}{^{G}}\iota_{i_{J}}\Theta=%
\limfunc{d}{^{G}}(\iota_{i_{J}}\Theta_{\omega,g}+\frac{1}{2}\iota_{i_{J}}%
\limfunc{d}{^{G}}\lambda_{(0,L)})=  \label{eq5.4} \\
& =\limfunc{d}{^{G}}(\limfunc{d}{^{G}}\omega+\frac{1}{2}\iota_{i_{J}}%
\limfunc{d}{^{G}}\lambda_{(0,L)})=\frac{1}{2}\limfunc{d}{^{G}}\iota_{i_{J}}%
\limfunc{d}{^{G}}\lambda _{(0,L)}.  \notag
\end{align}

But note that the closed graded $1-$form $\iota_{i_{J}}\limfunc{d}{^{G}}%
\lambda_{(0,L)}$ has $\mathbb{Z}-$degree $2$. Therefore it is exact, i.e.,
there exists $\alpha\in\Omega^{2}(M)$ (including the factor $\frac12$) such
that 
\begin{equation*}
\iota_{i_{J}}\limfunc{d}{^{G}}\lambda_{(0,L)}=\limfunc{d}{^{G}}\alpha.
\end{equation*}
This implies that $i_{J}$ is globally Hamiltonian since $\iota_{i_{J}}\Theta=%
\limfunc{d}{^{G}}(\omega+\alpha)$. Now, as $\iota_{i_{J}}\Omega=\limfunc{d}{%
^{G}}\omega=\iota_{i_{J}}\Theta_{\omega,g}$, then $\iota_{i_{J}}\limfunc{d}{%
^{G}}\lambda_{(0,L)}=0$. We shall prove next that $\alpha=0$. Indeed, since 
\begin{equation}
<i_{X},i_{J};\limfunc{d}{^{G}}\lambda_{(0,L)}>=0  \label{eq5.5}
\end{equation}
we get $0=i_{X}\alpha$ for all $X$. Then $\alpha=0$. Finally, we shall prove
that $L$ verifies the required condition by a direct calculation. For any $%
Y,Z\in\mathcal{X}(M),$%
\begin{align}
0 & =\left\langle \mathcal{L}_{X},i_{J};\limfunc{d}{^{G}}\lambda_{(0,L)}%
\right\rangle (Y,Z)  \label{eq5.6} \\
& =(-i_{J}(L(X))-\left\langle i_{[X,J]};\lambda_{(0,L)}\right\rangle )(Y,Z) 
\notag \\
& =-i_{J}(L(X))(Y,Z)=-L(X;JY,Z)-L(X;Y,JZ).  \notag
\end{align}
Therefore $L(X;JY,Z)=-L(X;Y,JZ)$.

For the converse, assume now this condition on $L$. As before, 
\begin{align*}
\mathcal{L}_{i_{J}}^{G}\Theta & =\limfunc{d}{^{G}}\iota_{i_{J}}\Theta=%
\limfunc{d}{^{G}}(\iota_{i_{J}}\Theta_{\omega,g}+\iota_{i_{J}}\limfunc{d}{%
^{G}}\lambda_{(0,L)})= \\
& =\limfunc{d}{^{G}}\limfunc{d}{^{G}}\omega+\limfunc{d}{^{G}}\iota_{i_{J}}%
\limfunc{d}{^{G}}\lambda_{(0,L)}= \\
& =\limfunc{d}{^{G}}\iota_{i_{J}}\limfunc{d}{^{G}}\lambda_{(0,L)}
\end{align*}

But from (\ref{eq5.5}) and (\ref{eq5.6}), we have $\iota_{i_{J}}\limfunc{d}{%
^{G}}\lambda_{(0,L)}=0$, so $\mathcal{L}_{i_{J}}^{G}\Omega=0$.
\end{proof}

Thus, in order to be $i_{J}$ Hamiltonian, $L(X;\_,\_)$ must present the same
property as $g$ in Lemma \ref{lema}.

\section{ Some applications and examples}

In the preceding sections, we have obtained the necessary and sufficient
condition that a $(1,1)$-tensor field $J$ must fulfil in order to induce a
derivation $i_{J}$ which is a graded Hamiltonian vector field with respect
to the graded symplectic form $\Omega _{\omega ,g}$. Now, we would like to
present some examples of this situation. The condition on $J$, is expressed
in the form of a relation between $J,g$ and $\omega $ similar to that
existing in the field of complex differential geometry: 
\begin{equation*}
\omega (X,Y)=g(JX,Y)\qquad \forall X,Y\in \mathcal{X}(M),
\end{equation*}
but this relation is also characteristic of another kind of geometry (maybe
not so widespread as the complex one), called paracomplex geometry, a survey
of which can be consulted in \cite{[3]}, where numerous applications to
subjects as Finsler spaces, quantizable coadjoint orbits or negatively
curved manifolds are referenced. Indeed, there is a close parallelism
between both geometries, and the main difference is the replacement of the
restriction on $J$ to be an almost complex structure (in the case of complex
geometry) by demanding $J$ to be an almost product structure (in the case of
the paracomplex one).

For the sake of completeness and comparison, we recall now the basic
definitions of complex and paracomplex geometry.

\begin{definition}
Let $M$ be a manifold and $J\in T^{(1,1)}M$. We say that

\begin{enumerate}
\item  $J$ is an almost product structure, if $J^{2}= Id$,

\item  $J$ is an almost complex structure, if $J^{2}=-Id$.
\end{enumerate}
\end{definition}

\begin{definition}
An almost para-Hermitian manifold $(M,g,J)$, is a manifold $M$ endowed with
an almost product structure $J$ and a pseudoriemannian metric $g$,
compatible in the sense that 
\begin{equation}
g(JX,JY)=-g(X,Y)\text{ \ \ \ , \ }\forall X,Y\in\mathcal{X}(M).
\label{paraher}
\end{equation}
\end{definition}

\begin{definition}
An almost Hermitian manifold $(M,h,K)$, is a manifold $M$ endowed with an
almost complex structure $K$ and a riemannian metric $g$, compatible in the
sense that 
\begin{equation}
h(KX,KY)=h(X,Y)\text{ \ \ \ , \ \ }\forall X,Y\in\mathcal{X}(M).
\label{hermit}
\end{equation}
\end{definition}

\begin{remark}
Note that, in any case, the hypothesis of lemma \ref{lema} is satisfied.
\end{remark}

For an almost para-Hermitian manifold $(M,g,J)$, we can define its
fundamental $2-$form $\alpha$ as 
\begin{equation*}
\alpha(X,Y)=g(JX,Y),
\end{equation*}
and similarly, for an almost-Hermitian manifold $(M,h,K)$, its fundamental $%
2-$form $\beta$ is 
\begin{equation*}
\beta(X,Y)=h(KX,Y).
\end{equation*}

\begin{definition}
An almost para-K\"{a}hler manifold is an almost para-Hermitian manifold $%
(M,g,J)$ such that its fundamental $2-$form $\alpha$ is closed, that is $%
d\alpha=0.$
\end{definition}

\begin{definition}
An almost K\"{a}hler manifold is an almost Hermitian manifold $(M,h,K)$ such
that its fundamental $2-$form $\beta$ is closed, that is $d\beta=0.$
\end{definition}

\begin{remark}
In any of the preceding definitions, the prefix ``almost'' can be dropped if
an integrability condition is imposed on $J$ (resp. $K$).
\end{remark}

Now, as a direct consequence of our previous work, we can give a
charaterization of these structures in terms of graded Hamiltonian vector
fields.

\begin{theorem}
Let $(M,f,L)$ be either an almost para-Hermitian or an almost Hermitian
manifold, with $\gamma\in\Omega^{2}(M)$ the corresponding fundamental $2-$%
form. Then, $L\in T^{(1,1)}M$ is the unique tensor field for which $i_{L}$
is the graded Hamiltonian vector field associated to $\gamma$ with respect
to the graded symplectic form $\Omega_{\gamma,f}$.
\end{theorem}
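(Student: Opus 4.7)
The strategy is to recognize this as a direct specialization of Theorem 20. In the almost para-Hermitian (resp. almost Hermitian) setting, the compatibility condition (\ref{paraher}) (resp. (\ref{hermit})) together with $L^{2}=\mathrm{Id}$ (resp. $L^{2}=-\mathrm{Id}$) implies the skew-symmetry $f(LX,Y)=-f(X,LY)$, which is exactly the hypothesis of Lemma \ref{lema}. Indeed, substituting $Y\mapsto LY$ in the compatibility relation and using $L^{2}=\pm\mathrm{Id}$ yields the identity in one line in each case; the Remark preceding the theorem already records this observation.

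With this skew-symmetry verified, the defining identity $\gamma(X,Y)=f(LX,Y)$ of the fundamental $2$-form is precisely the relation $\omega(X,Y)=g(JX,Y)$ appearing in Theorem 20, under the identifications $\omega\leftrightarrow\gamma$, $g\leftrightarrow f$, $J\leftrightarrow L$. Applying Theorem 20 then gives existence: $i_{L}$ is the graded Hamiltonian vector field $D_{\gamma}$ associated to $\Omega_{\gamma,f}$. No further work is needed for this half of the statement.

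For uniqueness, I would split the argument into two steps. First, $D_{\gamma}$ is uniquely determined by $\gamma$ through the non-degeneracy of the graded symplectic form $\Omega_{\gamma,f}$, so the datum ``$i_{L}=D_{\gamma}$'' pins down the derivation $i_{L}$. Second, the assignment $L\mapsto i_{L}$ from $T^{(1,1)}M$ to graded derivations of bidegree $(0,1)$ is injective: evaluating $i_{L}$ on any $1$-form $\alpha$ gives $(i_{L}\alpha)(X)=\alpha(LX)$, so $i_{L}=0$ forces $L=0$. Combining these two steps recovers $L$ uniquely.

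The argument is thus an almost-trivial translation, and I do not expect any serious obstacle; the only points requiring attention are the short algebraic verification of $f(LX,Y)=-f(X,LY)$ in both cases, and (if one is being careful) the implicit requirement that $\gamma$ be closed so that $\Omega_{\gamma,f}$ is genuinely symplectic — a condition that holds in the para-Kähler / Kähler setting and which should really be part of the hypotheses for the word ``Hamiltonian'' to make full sense.
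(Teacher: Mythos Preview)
Your proposal is correct and follows essentially the same approach as the paper: the paper's proof consists of the single sentence ``This result is just a corollary to Theorem~$13$'' (the theorem asserting $D_{\omega}=i_{J}$), and you have simply unpacked that corollary, supplying the verification of the skew-symmetry hypothesis (which the paper records in the Remark just before the statement) and an explicit uniqueness argument. Your closing observation about closedness of $\gamma$ is well taken and is not addressed in the paper's proof; the cited theorem does assume $\omega$ symplectic, so strictly speaking the conclusion as stated is only fully justified in the almost (para-)K\"ahler case noted in the subsequent Remark.
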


\begin{proof}
This result is just a corollary to Theorem $13$.
\end{proof}

\begin{remark}
In particular, this result applies to (almost) para-K\"{a}hler and (almost)
K\"{a}hler manifolds.
\end{remark}

In order to get a feeling of how paracomplex geometry enter into Mechanics,
let us consider a very simple example. Later, we will show how it can be
used to give a characterization of some canonical structures on the tangent
space of a manifold.

Let $(M,g)$ be a pseudoriemannian manifold, and let $L\in C^{\infty}(TM)$ be
the Lagrangian defined by the quadratic form associated to $g$; in the local
coordinates $\{q^{i},v^{i}\}_{1\leq i\leq n}$ of $TM$ ($n$ being the
dimension of $M)$, it is written as $L(v)=\frac{1}{2}g_{ij}(q)v^{i}v^{j},v%
\in T_{q}M.$

It is well known that $L$ generates a symplectic form on $TM$, denoted by $%
\omega_{L}$, which in coordinates has the expression 
\begin{equation*}
\omega_{L}=d(\frac{\partial L}{\partial v^{i}}dq^{i})=\frac{\partial^{2}L}{%
\partial q^{j}\partial v^{i}}dq^{j}\wedge dq^{i}+\frac{\partial^{2}L}{%
\partial v^{j}\partial v^{i}}dv^{j}\wedge dq^{i},
\end{equation*}
so, for the particular Lagrangian we are considering, it reduces to 
\begin{equation}
\omega_{L}={\partial_{k}g_{il}}dq^{j}\wedge dq^{i}+g_{ij}dv^{j}\wedge dq^{i}.
\label{lagran}
\end{equation}

Now, we will employ the theory of horizontal lifts of tensor fields to the
tangent space of a manifold, which can be consulted in \cite{[14]}. The
lifting is understood to be associated to a connection $\bar{\nabla}$ on $M$%
; let $\bar{\Gamma}_{jk}^{i}$ be its Christoffel symbols in the coordinates $%
(q^{i})_{1\leq i\leq n}$. Then, a vector field $X\in \mathcal{X}(M)$ gives
rise to its horizontal and vertical lifts with respect to $\bar{\nabla}%
,X^{h},X^{v}$, and a canonical almost product on $TM$ is defined by putting 
\begin{equation*}
\begin{array}{rcl}
J(X^{v}) & := & X^{v} \\ 
J(X^{h}) & := & -X^{h}.
\end{array}
\end{equation*}
\qquad\ \ This can be written matricially as 
\begin{equation}
J\equiv 
\begin{pmatrix}
1 & 0 \\ 
0 & -1
\end{pmatrix}
\label{defJ}
\end{equation}
The dual coframe adapted to $\bar{\nabla}$ is $\{\theta ^{i},\eta
^{i}\}_{i=1}^{n}$, where 
\begin{equation*}
\begin{array}{rcl}
\theta ^{i} & := & dq^{i} \\ 
\eta ^{i} & := & dv^{i}+\bar{\Gamma}_{jk}^{i}v^{j}dq^{k}.
\end{array}
\end{equation*}
We can define the horizontal lift to $TM$ of a $(0,2)$-tensor field $S$ as $%
S_{\bar{\nabla}}^{H}$, through 
\begin{align*}
S_{\bar{\nabla}}^{H}(X^{v},Y^{v})& =0 \\
S_{\bar{\nabla}}^{H}(X^{v},Y^{h})& =S_{\bar{\nabla}%
}^{H}(X^{h},Y^{v})=(S(X,Y))^{v} \\
S_{\bar{\nabla}}^{H}(X^{h},Y^{h})& =0,
\end{align*}
for all $X,Y\in \mathcal{X}(M)$, so if $S=S_{ij}dq^{i}\otimes dq^{j}$, then
in the adapted coframe 
\begin{equation*}
S_{\bar{\nabla}}^{H}=S_{ij}\theta ^{i}\otimes \eta ^{j}+S_{ij}\eta
^{i}\otimes \theta ^{j}.
\end{equation*}

Consider now the particular case of the symmetric connection ${\bar{\nabla}}$
defined locally from $g$ (and thus associated to $L$) as $\bar{\Gamma}%
_{lk}^{j}=\frac{1}{2}g^{jm}\partial_{m}g_{lk}.$ Then, we can lift the metric 
$g$ to a metric $g_{\overline{\nabla}}^{H}$ in $TM$ with the expression $g_{%
\bar{\nabla}}^{H}=g_{ij}\theta^{i}\otimes\eta^{j}+g_{ij}\eta^{i}\otimes%
\theta^{j}$, or, in matrix form, 
\begin{equation}
g_{\bar{\nabla}}^{H}\equiv 
\begin{pmatrix}
0 & g \\ 
g & 0
\end{pmatrix}
.  \label{defg}
\end{equation}

\begin{remark}
In the coordinates $\{q^{i},v^{i}\}_{1\leq i\leq n}$, we have 
\begin{equation*}
g_{\bar{\nabla}}^{H}=(\partial_{k}g_{ij})v^{l}dq^{i}\otimes
dq^{k}+2g_{ij}dq^{i}\otimes dv^{j}.
\end{equation*}
\end{remark}

On the other hand, from (\ref{lagran}) it is easy to see that $\omega_{L}$
can be represented by the matrix 
\begin{equation}
\omega_{L}\equiv 
\begin{pmatrix}
0 & g \\ 
-g & 0
\end{pmatrix}
\label{defomega}
\end{equation}
corresponding to the expression in the adapted coframe $\omega_{L}=g_{ij}%
\theta^{i}\otimes\eta^{j}-g_{ij}\eta^{i}\otimes\theta^{j}.$

But now, from (\ref{defJ}), (\ref{defg}) and (\ref{defomega}) we have 
\begin{equation}
\omega_{L}(A,B)=g_{\bar{\nabla}}^{H}(JA,B),\qquad\forall A,B\in\mathcal{X}%
(TM).  \label{formula5}
\end{equation}

In this expression, $\omega_{L}$ is a symplectic form and $g_{\bar{\nabla}%
}^{H}$ is a metric, so it resembles the characteristic relation of the
K\"{a}hler geometry, except for the fact that here $J^{2}=Id$, and not $%
J^{2}=-Id$. This is precisely the main feature of paracomplex geometry, as
we have seen. So, $(TM,g_{\bar{\nabla}}^{H},J,\omega_{L})$ is a
para-K\"{a}hler manifold.

\begin{remark}
A whole class of examples of this kind, treated in the spirit of Finsler
geometry, can be consulted in \cite{[12]}.
\end{remark}

We finish with an application of this example from Mechanics to Geometry.

\begin{proposition}
The canonical almost product structure $J$ on $TM$ is such that $i_{J}$ is
the unique graded Hamiltonian vector field for $\omega_{L}$ with respect to $%
\Omega_{\omega_{L},g_{\bar{\nabla}}^{H}}$, this being independent of the
pseudoriemannian metric $g$ chosen.
\end{proposition}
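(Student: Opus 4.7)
The plan is to recognize this as a direct application of Theorem 13 to the base manifold $TM$ (rather than $M$), with the symplectic form $\omega_{L}$, the pseudoriemannian metric $g_{\bar{\nabla}}^{H}$, and the associated even graded symplectic form $\Omega_{\omega_{L},g_{\bar{\nabla}}^{H}}$ on $(TM,\Omega(TM))$. The essential observation is that equation (\ref{formula5}) can be read as the defining relation for the tensor $J$ that appears in Theorem 13: it says that the canonical almost product structure coincides with the unique $(1,1)$-tensor on $TM$ characterized by $\omega_{L}(A,B)=g_{\bar{\nabla}}^{H}(\tilde{J}A,B)$ for all $A,B\in\mathcal{X}(TM)$. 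This uniqueness is guaranteed by the non-degeneracy of $g_{\bar{\nabla}}^{H}$, which is clear from its matrix form in the adapted coframe.

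Once this identification is made, Theorem 13 applies verbatim with $(M,\omega,g)$ replaced by $(TM,\omega_{L},g_{\bar{\nabla}}^{H})$ and produces $D_{\omega_{L}}^{\Omega_{\omega_{L},g_{\bar{\nabla}}^{H}}}=i_{J}$. Uniqueness of the graded Hamiltonian vector field corresponding to $\omega_{L}\in\Omega^{2}(TM)$ follows from the non-degeneracy of $\Omega_{\omega_{L},g_{\bar{\nabla}}^{H}}$, which holds thanks to Proposition \ref{prop3} (the determinant criterion $\det\omega_{L}\cdot\det g_{\bar{\nabla}}^{H}\neq0$).

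For the final claim of independence of the metric $g$ on $M$, I would emphasize that the argument above never uses any particular feature of $g$: for every choice of pseudoriemannian $g$ on $M$ one constructs $L$, then $\omega_{L}$, the Levi-Civita connection $\bar{\nabla}$ of $g$, the lifted metric $g_{\bar{\nabla}}^{H}$, and the adapted almost product $J$, and the relation (\ref{formula5}) holds universally in the adapted coframe. Hence the conclusion $D_{\omega_{L}}=i_{J}$ persists regardless of the initial $g$; only the concrete expressions of $J$, $\omega_{L}$ and $g_{\bar{\nabla}}^{H}$ change, while the structural statement remains the same.

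The main (very mild) obstacle is purely notational: one must carry out the translation from the setup of Theorem 13 on an abstract base $M$ to the present setup where the base is itself $TM$, making sure that the roles of $X,Y\in\mathcal{X}(M)$ in Theorem 13 are correctly played by arbitrary vector fields $A,B\in\mathcal{X}(TM)$ (not just horizontal or vertical lifts). This is ensured because (\ref{formula5}) is an identity of $(0,2)$-tensors on $TM$, valid on all of $\mathcal{X}(TM)$, not merely on lifts. After this remark, the proposition reduces to a one-line corollary of Theorem 13.
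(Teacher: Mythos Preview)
Your proposal is correct and follows essentially the same route as the paper: the paper's proof is the single sentence ``This is just a corollary of Theorem~$14$'' (the characterization of almost (para-)Hermitian structures via $i_{J}$ being Hamiltonian), which is itself a one-line corollary of the Theorem~$13$ you invoke directly. The only difference is that you bypass the intermediate para-Hermitian statement and appeal straight to the underlying result $D_{\omega}=i_{J}$ together with formula~(\ref{formula5}); this is a harmless shortcut, and your remarks on uniqueness and on the independence from $g$ spell out what the paper leaves implicit.
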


\begin{proof}
This is just a corollary of Theorem $14$.
\end{proof}

{\bf Acknowledgements:} The first author (J.M.) wants to acknowledge Y.
Kosmann-Schwarzbach for her encouragement after reading a first version of
this paper. The second author (J.A.V) wants to express his gratitude to the
CIMAT, Guanajuato (M\'{e}xico), for its warm hospitality during his stay
there, when part of this work was done. Special thanks are due to Adolfo S%
\'{a}nchez-Valenzuela and Mitchell Rothstein for very helpful discussions.

Work partially supported by a Grant from the Spanish Ministerio de
Educaci\'{o}n y Cultura, ref. PB-97-1386.

\end{document}